\newtheorem{theorem}{Theorem}[section]
\newtheorem{lemma}[theorem]{Lemma}
\theoremstyle{definition}
\newtheorem{definition}{Definition}[section]
\newtheorem{example}{Example}[section]
\newcommand{\F}{\mathcal{F}}
\newcommand{\argmax}{\text{argmax}}
\newcommand{\expect}{\mathbb{E}}
\newcommand{\eps}{\varepsilon}
\newcommand{\supp}{\mathrm{supp}}
\newcommand{\Support}{\mathcal{S}}
\DeclareMathOperator{\ALG}{ALG}
\newcommand{\ind}{\mathbb{I}}
\DeclareMathOperator{\OPT}{OPT}
\newcommand{\OptOn}{\textnormal{\textsc{OPTOn}}}
\newcommand{\OptOff}{\textnormal{\textsc{OPTOff}}}
\newcommand{\OptOffIIF}{\textnormal{\textsc{OPTOffIIF}}}
\newcommand{\OptOnIIF}{\textnormal{\textsc{OPTOnIIF}}}
\newcommand{\OnTIF}{\textnormal{\textsc{OnTIF}}}
\newcommand{\OptOnMHIIF}{\textnormal{\textsc{OPTOnMhIIF}}}
\newcommand{\I}{\mathcal{I}}
\renewcommand{\wp}{\mathrm{w.p.}\ \ }
\newcommand{\bigO}{\mathcal{O}}
\newcommand{\st}{\text{s.t.}}
\title{Individual Fairness in Prophet Inequalities}
\author[1]{Makis Arsenis}
\author[1]{Robert Kleinberg}
\affil[1]{Cornell University, Ithaca, NY\protect\\
{\footnotesize \texttt{\{marsenis,rdk\}@cs.cornell.edu}}}
\date{}
\begin{document}

\begin{titlepage}
\maketitle

\begin{abstract}
Prophet inequalities are performance guarantees for online
algorithms (a.k.a.~stopping rules) solving the following
``hiring problem'': a decision
maker sequentially inspects candidates whose values are 
independent random numbers 
and is asked to hire at most one candidate
by selecting it before inspecting the 
values of future candidates in the sequence.
A classic
result in optimal stopping theory asserts that there exist
stopping rules guaranteeing that the decision maker will
hire a candidate whose expected value is at least
\emph{half} as good as the expected value of the candidate
hired by a ``prophet'', i.e.~one who has simultaneous access
to the realizations of all candidates' values.

Such stopping rules may indeed have provably good
performance but might treat individual candidates unfairly
in a number of different ways. In this work we identify two
types of individual fairness that might be desirable in
optimal stopping problems. We call them
\emph{identity-independent fairness} (IIF) and
\emph{time-independent fairness} (TIF) and give precise
definitions in the context of the hiring problem. We give
polynomial-time algorithms for finding the optimal IIF/TIF
stopping rules for a given instance with discrete support
and we manage to recover a prophet inequality with
factor $1/2$ when the
decision maker's stopping rule is required to satisfy both
fairness properties while the prophet is unconstrained. We
also explore worst-case ratios between optimal selection rules
in the presence vs.~absence of individual fairness constraints, 
in both the online and offline settings. 
We prove
an impossibility result showing that there is no prophet
inequality with a non-zero factor for either IIF or TIF
stopping rules when we further constrain the decision maker
to make a hire with probability~$1$. We finally consider
a setting in which the decision maker doesn't know the
distributions of candidates' values but has access to
a bounded number of independent samples from each distribution.
We provide constant-competitive algorithms that 
satisfy both TIF and IIF, using one sample from each
distribution in the offline setting and two samples
from each distribution in the online setting.
\end{abstract}

\end{titlepage}

\section{Introduction}
\label{sec:intro}

Optimal stopping problems, which model the problem of
deciding when to select an element of a random sequence
amid uncertainty about the future elements to be sampled,
are ubiquitous in the search and matching problems 
that underlie the design of dynamic markets.
In optimal stopping, as in many other decision
problems, there is a tension between optimality 
--- i.e., maximizing the expected value of the chosen
alternative --- and fairness, i.e.~ensuring equal 
opportunity for all alternatives. In contrast to
prior work that investigated optimal stopping 
in the context of {\em group fairness} 
constraints \cite{correa2021fairness},
in this work we initiate the study of how
{\em individual fairness constraints} influence
the performance of stopping rules.

Our approach to this question is inspired by 
the literature on {\em prophet inequalities}, 
which seeks to understand the performance of 
stopping rules by proving theorems that identify
worst-case bounds on the ratio between the 
expected performance of different types of
selection rules --- for example, comparing
the optimal stopping rule with an offline 
selection rule that always selects the 
best alternative in hindsight. In the same 
spirit, we seek worst-case multiplicative 
comparisons between optimal stopping rules
(or offline selection rules) with and 
without various fairness constraints. 

In this paper we work with an abstract 
formulation of individual fairness in optimal 
stopping that is as 
simple and free of application-specific details
as possible. However, to motivate the
fairness constraints we will consider, 
it will be helpful to contemplate some motivating 
applications in which the potential for
stopping rules to yield unfair outcomes
is evident. 

\begin{example} \label{ex:firm-hiring}
Consider a firm 
interviewing a sequence of candidates
for a job opening. Assume that the firm
may hire at most one candidate, and that
the decision whether or not to hire a
candidate must be made immediately after
their interview, without waiting to judge
the quality of candidates scheduled to be 
interviewed later in the hiring season. 
If we assume that the firm wishes to
maximize the expected quality of the
candidate hired, and that the candidates'
qualities are 
independent and identically 
distributed\footnote{The i.i.d.~assumption
is for the sake of this example. In general
we will be considering independent but non-identical 
distributions in this work.}
random variables, then an optimal
stopping rule would calculate a decreasing
sequence of thresholds and hire the 
first candidate whose quality exceeds
the corresponding threshold. Thus, although
the candidates are {\em a priori} identical,
the optimal stopping rule discriminates according
to arrival time: a candidate of low
quality has no chance of being hired
if they interview early (when the threshold
will be above their quality level) but 
stands a chance of being hired if they
interview late. On the other hand, the 
opposite type of unfairness exists for 
candidates of high quality: they have a
high probability of being hired if they
interview early, and a lower probability
of being hired if they interview late
because of the possibility that an 
earlier candidate has already been hired.
Replacing the optimal stopping rule with
a threshold stopping rule --- i.e., one
which uses a constant threshold over time
and hires the first candidate whose quality
exceeds this threshold --- eliminates the
first type of unfairness but not the second.
\end{example}

In the preceding example, the selection
rule treated individuals differently due
to differences in their arrival time. 
Another potential type of unfairness 
occurs when an individual's probability
of being selected (conditional on their
value) depends on the individual's identity.
This type of unfairness can arise even in
offline selection problems, where a 
decision maker is choosing one of $n$
elements and observes the value (or
quality) of each element before making
a choice. 

\begin{example} \label{ex:rideshare}
Consider a ride
sharing platform in a city where demand
exceeds supply: in each time interval, 
the platform receives a set of requests 
and must decide which ones will receive 
service. A seemingly fair selection rule
is to choose the request of maximum value,
breaking ties at random. To illustrate the
potential for unfairness, suppose
there is only one driver and two users,
Odysseus and Penelope. Odysseus is a 
frequent traveler who requests a ride
every day, whereas Penelope mostly stays
home and only needs a ride once every 
$m$ days for some $m > 1$. Let us model 
the value of selecting a user to be 
1 if they requested a ride and 0 otherwise.
If the platform uses random tie-breaking
then half of Penelope's requests
are denied whereas the
fraction of Odysseus' requests that
are denied is only $\frac{1}{2m}$. 
A fairer rule would select Penelope
with probability $\frac{m}{m+1}$ when
she makes a request, and otherwise it 
would select Odysseus. Under this rule,
both users would find that $\frac{1}{m+1}$
of their requests are denied and the rest
are served. 
\end{example}

The foregoing two examples motivate two
different notions of individual fairness.
We formalize these two properties 
in \Cref{sec:fairness} and label
them as {\em time-independent fairness}
(TIF) and {\em identity-independent fairness}
(IIF). Depending on the application, one may 
wish for selection rules to satisfy one of
these properties, or the other, or both.
A few natural questions arise about such
fairness-constrained selection rules.
\begin{compactenum}
    \item {\bf Can one efficiently optimize
    over fairness-constrained stopping
    rules?} Absent fairness constraints,
    one can generally compute optimal stopping
    rules using dynamic programming, but it 
    appears difficult to incorporate 
    fairness constraints into the dynamic
    programming formulation since they
    impose dependencies
    among the decisions made at different 
    times. We show in \Cref{sec:characterization}
    that optimal fairness-constrained
    stopping rules can be computed by
    solving a polynomial-sized linear 
    program.
    \item {\bf How well can fairness-constrained
    selection rules approximate optimal 
    selection rules, in the worst case?}
    There are many versions of this question,
    depending whether the fairness constraint
    is TIF or IIF (or both), and whether the
    fairness-constrained selection rule 
    or the optimal selection rule (or both) 
    is allowed to make decisions offline
    (i.e., observe the full sequence of 
    values before making its selection). 
    In \Cref{sec:competitive} we answer
    almost all versions of this question;
    our results are summarized in
    \Cref{fig:squares}. In \Cref{sec:impossibility}
    we investigate how the answers to these
    questions differ when one imposes an
    additional constraint that the decision
    maker must select one of the $n$ alternatives.
    In the traditional setting of prophet inequalities
    one can assume this property without loss of
    generality: if no hire is made before the last
    interview, a firm loses nothing by hiring the 
    final candidate. In contrast, we show that requiring 
    the firm to hire a candidate, while also imposing
    either the TIF or IIF fairness constraint, can make an enormous 
    qualitative difference: the expected value of
    the candidate selected by the optimal TIF (or IIF) 
    stopping rule
    may differ from the expected value of the best candidate
    by an unbounded factor.
    \item {\bf In cases when the precise input
    distribution is not known, can 
    approximately optimal 
    fairness-constrained selection rules
    be learned from data?} Since the fairness
    criteria in this paper are distribution-dependent,
    one might anticipate that there is no way 
    for a decision maker to ensure that their 
    selection rule is fair without
    knowing the input distribution. In 
    \Cref{sec:double-sample} we show that
    this intuition is false: if the decision 
    maker has access to a single sample from 
    each distribution, that is enough to 
    implement a constant-competitive offline
    selection rule that satisfies both TIF and IIF.
    With access to {two} independent samples from
    each distribution, the decision maker can implement
    a constant-competitive online stopping rule that
    satisfies both TIF and IIF. The question of 
    whether fairness-constrained stopping rules
    can be constant-competitive with just one
    sample from each distribution, rather than two,
    is an enticing open question that we leave for
    future work.
\end{compactenum}

\subsection{Related Work}

Our work fits the line of research on optimal stopping
problems initiated by \cite{KrengelSucheston} and followed
by others including \cite{SamuelCahn} and more recent works
highlighting the connection of stopping problems with
Mechanism Design initiated by \cite{Hajiaghayi07}. Since then,
research in the area flourished, and a
number of variants of optimal stopping problems have been
studied such as ones that allow hiring multiple people
(e.g.~\cite{Hajiaghayi07,KleinbergWeinberg}), giving the
interviewer the ability to choose the order of inspection
(e.g.~\cite{yan10}) or letting nature permute candidates
uniformly at random before being interviewed, a setting
known as a prophet secretary problem
(\cite{Esfandiari2007}). Another setting, closely related to
the one we consider in \Cref{sec:double-sample} is when the
decision maker doesn't have access to the proability
distributions of the random variables but instead on one (or
a few) samples from each (e.g.~\cite{Azar2014,rubinstein20}).

Among the prophet inequalities literature, fairness has been
very recently considered in \cite{correa2021fairness} but
unlike our work which deals with individual fairness (from
the perspective of each candidate),
\cite{correa2021fairness} deal exclusively with a
form of group-fairness constraint for the secretary problem
and prophet inequality settings.
In their prophet model, individuals are partitioned into disjoint
groups, and fairness constraints are expressed as lower
bounds on the expected number of members selected from each
group. They prove a tight competitive ratio of $1/2$ between
a fair, online algorithm and a fair, offlline one, akin to
some of our results. Additionally they compare fair, online
algorithms to fair, offline ones in some restricted settings
(e.g.~group hiring priorities being proportional to group size,
distributions being i.i.d. and more) and provide tight
competitive ratios there as well.

Somewhat closer to our setting, a similar notion to our TIF
fairness has been studied in the context of the secretary
problem under the name of ``Incentive Compatibility'' in
\cite{Buchbinder}. This is the property of a stopping rule
for the secretary problem which requires the probability of
selecting a candidate at the $t$-th step to be equal for
all $t \in [n]$, hence the name incentive compatible since
any candidate $i$ has no incentive to arrive at any
different time. Interpreted within our framework, the
incentive compatibility of \cite{Buchbinder} can be
thought of as an ex-ante kind of time-independent
fairness, whereas our TIF is an ex-interim kind since our
definition conditions on the sampled value of the $i$-th
candidate when considering their hiring probability.

In a broader sense, our work adds to a line of research
exploring fairness in dynamic settings such as
dynamic resource allocation problems
(e.g.~\cite{radEC21,lien14,Sinclair2020}) or dynamic fair
division (e.g.~\cite{Walsh11,Kash13}), settings in which
agents arrive dynamically and some allocation needs to
be computed that is both maximizing some performance metric
but also satisfying certain fairness criteria.

\section{Preliminaries}
\label{sec:prelim}

A \emph{prophet inequality} is a competitive ratio guarantee
for the following online optimization problem which we will
refer to as the \emph{(online) hiring problem}: a sequence
of $n$ \emph{candidates} numbered $1$ through $n$ are being
interviewed in a pre-defined order dictated by a fixed
permutation\footnote{Our description of the problem differs
slightly from what is standard in the literature.  Usually
the candidates are numbered according to their arrival order
and there is no distinction between time-steps and the
identities of the candidates.} $\pi \in S_n$ (we use $S_n$
to denote the symmetric group of order $n$ containing all
permutations on $n$ elements). This means that on the $t$-th
iteration, candidate $\pi(t)$ arrives to be interviewed.
Each candidate $i$ has an associated non-negative
\emph{value} $X_i$ which is a random variable distributed
according to a distribution $\F_i$. The \maedit{tuple of all
distributions $(\F_i)_{i = 1}^n$}, the value of $n$ and the
permutation $\pi$ are all revealed to the interviewer before
any candidates arrive.  We will denote an instance like that
as $\I = ((\F_i)_{i = 1}^n, \pi)$.\footnote{Sometimes, we
need to consider an instance without an attached arrival
ordering in which case the permutation will simply be
omitted.} Upon arrival of candidate $i = \pi(t)$ at time
step $t$, their value $X_i$ is revealed to the interviewer
who now has to make a decision on whether to hire them or
not.  If they decide to hire candidate $i$, the process
terminates, otherwise they have to reject them irrevocably
and proceed to interview the next candidate.

The interviewer is allowed to follow any algorithm (even
randomized) which at each step $t$ makes a decision based on
the knowledge revealed to them so far (i.e.~the values
$X_{\pi(1)}, \ldots, X_{\pi(t)}$, the knowledge of priors
$\F_i$, the value of $n$ and the permutation $\pi$). We will
use the terms \emph{online algorithm} and \emph{stopping
rule} interchangeably to refer to an algorithm like that. In
all algorithms we consider in this paper, we assume they are
parameterized by $(\F_i)_{i = 1}^n, n$ and $\pi$ (if
present), meaning that those values are hard-coded in the
algorithm and the \emph{input} of the algorithm consists of
the values $X_i$ revealed at each step. When an algorithm
is given a name (e.g.~$\ALG$)\footnote{Sometimes, a
particular permutation $\pi$ will be denoted as a
superscript to indicate that this is an algorithm intended
to run on a particular arrival ordering.} in this or other
settings that follow, we assume its name also serves as a
random variable denoting the value of the candidate hired by
this algorithm (and zero if no person was hired). Under this
convention, $\expect[\ALG]$ refers to the expected value of
the person hired by $\ALG$ where expectation is over the
randomness of the distributions $\F_i$ as well as any
randomness used by the algorithm.

The goal of the interviewer is to follow an algorithm for
the given instance of the problem that maximizes the
expected value of the candidate hired. To evaluate the
performance of an algorithm followed by the interviewer we
compare it to the expected performance of an offline
analogue, an algorithm which has simultaneous access to the
realizations of all the values $X_i$ before being asked to
make a hiring decision. This offline version of the hiring
problem will be common in what follows and we refer to it as
the \emph{offline setting} and refer to an algorithm
followed by an offline interviewer simply as
\emph{algorithm}.  An optimal offline algorithm,
colloquially called a \emph{prophet}, is able to just pick
one of the best candidates $i^* \in \argmax_{i = 1}^n X_i$
regardless of the arrival permutation $\pi$.

In this setting, a prophet inequality is a statement of the
form ``for all instances $\I = ((\F_i)_{i=1}^n, \pi)$
there exists a stopping rule $\ALG$ for which
$\expect[ \ALG ] \ge c \cdot \expect[ X_{i^*}]$ where
$i^*$ is the index chosen by the prophet'', proven for some
positive constant $c < 1$. The \maedit{supremum over all
constants $c$ for which a prophet inequality can be proven}
is called the \emph{competitive ratio} \maedit{of the hiring
problem}.
\macomment{Maybe this is a good spot to introduce the
generic competitive ratio notation we used on the slides.}

The classic result of \cite{KrengelSucheston} asserts a
prophet inequality for $c = 1/2$. In other words, in any
instance of the hiring problem, and regardless of the
arrival ordering, there is a stopping rule $\ALG$ which on
expectation performs at least half as good as the prophet:
\[ \expect[\ALG] \ge \frac12 \cdot \expect[X_{i^*}] \]

Further, the constant of $c = 1/2$ is tight as the following
example shows, \maedit{which means that the competitive
ratio of the classic hiring problem is exactly $1/2$}:

\paragraph{Tightness of the prophet inequality} Consider the
following instance of two random variables parameterize by a
small constant $\eps > 0$ arriving according to the order
they are numbered ($\pi$ is the identity permutation):

\[
X_1 = 1,\quad
X_2 =
\begin{cases}
\frac{1}{\eps}, &\wp \eps\\
0, & \wp 1 - \eps
\end{cases}
\]

On expectation, the prophet receives a value of
$\expect[\max_{i = 1}^n X_i] = 1 / \eps \cdot \eps + 1 \cdot
(1 - \eps) = 2 - \eps$.  On the other hand, a
(deterministic) interviewer who is presented with $X_1$
first has no additional information to make a decision so
they must either commit to hiring $X_1$, or reject them and
be presented with the riskier choice of candidate $2$, at
which point they would have no choice but to hire.  In
either case, the interviewer is getting either $\expect[X_1]
= 1$ or $\expect[X_2] = 1$ for $\expect[\ALG] = 1$. Letting
$\eps \to 0$ gives a multiplicative gap approaching $1/2$
between the expected performance of an interviewer and that
of a prophet.\footnote{We presented the gap in the case of
an interviewer who follows a deterministic stopping rule but
it's not hard to verify that the same instance gives a $1/2$
gap even for a randomized interviewer.}

Part of the reason why prophet inequalities have become so
important in Mechanism Design is due to the fact that there
actually exist very simple-to-describe threshold stopping
rules that achieve the optimal competitive ratio of $1/2$.
A \emph{threshold stopping rule} is an online algorithm for
the hiring problem which pre-computes a threshold value $T$
as a function of the distributions $\F_i$, and
then hires the first candidate $i$ whose value passes the
threshold, i.e.~$X_i \ge T$.  \cite{SamuelCahn} was the
first to describe such a stopping rule which uses as a
threshold the median of the distribution of $\max_{i = 1}^n
X_i$, in other words, they define $T_{\mathrm{SC}}$ such
that $\Pr[\max_{i = 1}^n X_i \ge T_{\mathrm{SC}}] =
1/2$, and \cite{KleinbergWeinberg}
use $T_{\mathrm{KW}} = \frac12 \expect[\max_{i =
1}^n X_i]$. Either threshold yields a stopping rule
satisfying a prophet inequality for the optimal constant
$c = 1/2$.

The following notation is going to be useful in the sections
to follow: for each candidate $i$, let $\supp(\F_i) = \{ x >
0 \mid \Pr[X_i = x] \ne 0 \}$ denote the support of
distribution $\F_i$ and let $\Support = \bigcup_{i = 1}^n
\supp(\F_i)$. For the remainder of the paper we assume, with
possible exception some expository examples, that every
$\supp(\F_i)$ is a \emph{discreet} set.

\section{Fairness}
\label{sec:fairness}

In the introduction we studied examples of different kinds
of unfairness that manifest in the solutions of optimal
stopping problems and pinned down two particular kinds of
fairness criteria that were violated in each case. Here we
give formal definitions for those fairness properties in the
framework of the hiring problem we defined in the previous
section.

We begin by considering the situation in
\Cref{ex:rideshare}. The observation there was that
conditional on candidate $i$ having value $x$, the
probability of them getting hired should only depend on the
value $x$ and not on the identity $i$ of the particular
candidate. This leads naturally to the following fairness
definition adapted to our framework.

\begin{definition}[Identity-Independent Fairness (IIF)]
\label{defn:iif}
A algorithm $\ALG$ for a given instance $\I = ((\F_i)_{i =
1}^n, \pi)$ supported on $\Support$ is said to satisfy
\emph{Identity-Independent Fairness} if there exist a
function $p : \Support \to [0, 1]$ such that:

\[ \Pr[ \ALG \text{ hires } i \mid X_i = x ] =
p(x),\quad \forall i \in [n], x \in \Support \]

In other words, the probability of hiring candidate $i$
conditional on their value being $x$ is independent of their
\emph{identity} $i$ (but potentially dependent on their
value $x$).

When an algorithm/stopping rule satisfies this property, we
say it is ``IIF with probability function $p(x)$''.
\end{definition}

The above definition is given for an instance with a
particular arrival ordering $\pi$ but one can extend the
definition to say that a \emph{family} of algorithms
$\{\ALG^{\pi}\}_{\pi \in S_n}$ is IIF if each algorithm in
the family is IIF with some probability function
$p_{\pi}(x)$ (potentially different for each permutation).

Notice that \Cref{defn:iif} applies even to offline
algorithms --- ones that, just like the prophet, are not
constrained to make a decision before seeing the entirety of
the input.

Similarly, we can formalize the fairness notion demonstrated
in \Cref{ex:firm-hiring} as follows.

\begin{definition}[Time-Independent Fairness (TIF)]
\label{defn:tif}
A family of algorithms $\{\ALG^{\pi}\}_{\pi \in S_n}$, one for each
arrival ordering $\pi$, for an instance $\I = (\F_i)_{i =
1}^n$ supported on $\Support$ is said to satisfy
\emph{Time-Independent Fairness} if there exists a function
$p : [n] \times \Support \to [0, 1]$ such that:

\[ \Pr[\ALG^{\pi} \text{ hires } i \mid X_i = x] = p(i,
x),\quad \forall i \in [n], x \in \Support, \pi \in S_n \]

In other words, the probability of hiring candidate $i$
conditional on their value being $x$ is independent of their
arrival time across different arrival orderings (but is
allowed to depend on their identity $i$ and the value $x$).
\end{definition}

Unlike \Cref{defn:iif}, this definition of fairness is
meaningful only in the online setting. When considering the
behavior of an offline algorithm with access to all r.v.s at
once, the definition is trivially satisfied.

We claim that the two definitions capture different notions
of fairness and indeed they are independent of one another.
The following examples demonstrate that fact by giving
instances and families of stopping rules which satisfy
either \Cref{defn:iif} or \Cref{defn:tif} but not both.

\begin{itemize}
\item Consider an instance with two r.v.s $X_1, X_2$ of
arbitrary but independent distributions. For arrival
permutation $\pi_1 = (1, 2)$, define $\ALG^{\pi_1}$ as the
stopping rule which hires one of $X_1, X_2$ uniformly at
random. On the other hand, for $\pi_2 = (2, 1)$, define
$\ALG^{\pi_2}$ as the stopping rule that terminates without
hiring.

Each algorithm in the family $\{\ALG^{\pi_i}\}_{i = 1}^2$
satisfies the IIF definition with $p_{\pi_1}(x) = 1/2,
p_{\pi_2}(x) = 0$ respectively but it does {\bf not} satisfy
the TIF definition because of the dependence on $\pi$.

\item Consider again an instance $X_1, X_2$ of two
independent and arbitrarily distributed r.v.s. For either
arrival permutation, define a stopping rule $\ALG^{\pi_i}$
which always hires candidate $i = 1$ regardless of the
realization of their value and regardless of their arrival
time.

Then, the family $\{\ALG^{\pi_i}\}_{i = 1}^n$ satisfies the
TIF property with $p(i, x) = p(i) = \ind[i = 1]$. On the
other hand, none of the algorithms in the family satisfy the
IIF property because of the dependence on $i$.
\end{itemize}

More importantly, the definitions are {\bf not} mutually exclusive
and there exist families of algorithms that satisfy both.
A trivial example is an algorithm that never hires anyone.
This is obviously IIF with $p(x) = 0$ for all $x$ as well as
TIF with $p(i, x) = 0$ and indeed can be described as fair
even if it performs poorly.
Another example is an algorithm which hires a candidate $i$
uniformly at random among all the candidates regardless of
the permutation $\pi$ and regardless of the realization of
the value $X_j$ of any candidate $j$. This algorithm
is both IIF and TIF but again has poor performance tending
to zero as $n \to \infty$.

\maedit{
Before moving on, let's also verify what was implied earlier
in the introduction, that common stopping rules for the
hiring problem indeed fail to satisfy both of our fairness
definitions.
Consider a simple instance with two 0-1 random variables
distributed as follows:
}

\maedit{
\[
X_1 =
\begin{cases}
1, & \wp 1/2\\
0, & \wp 1/2
\end{cases}
,\quad
X_2 = 
\begin{cases}
1, & \wp 2/3\\
0, & \wp 1/3
\end{cases}
\]
}

\maedit{
Let $\ALG^{\rightarrow}$ be any single-threshold stopping
rule which uses a threshold $T \in (0, 1)$ (e.g.~the
\cite{KleinbergWeinberg} or the \cite{SamuelCahn} stopping
rule) on the forward arrival
ordering, i.e.~$\pi = (1, 2)$.
In this instance, such a rule would
hire a candidate under consideration if and only if their
value is equal to $1$. Similarly, $\ALG^{\leftarrow}$ is
the same stopping rule but applied on the reverse arrival
ordering, i.e.~$\pi = (2, 1)$. Consider now the hiring
probabilities of each r.v.~and on each arrival ordering
conditional on sampling the higher support point:
}

\maedit{
\begin{align*}
\Pr[ \ALG^{\rightarrow} \text{ hires } 1 \mid X_1 = 1] =
1,\quad&
\Pr[ \ALG^{\rightarrow} \text{ hires } 2 \mid X_2 = 1] =
\Pr[X_1 = 0] = \frac12\\
\Pr[ \ALG^{\leftarrow} \text{ hires } 1 \mid X_1 = 1] =
\Pr[X_2 = 0] = \frac13,\quad&
\Pr[ \ALG^{\leftarrow} \text{ hires } 2 \mid X_2 = 1] = 1
\end{align*}
}

\maedit{
Fixing an arrival ordering, we see that neither stopping
rule is IIF because the conditional probabilities depend on
the identity $i$ of the variable considered. Further, the
family $\{\ALG^{\rightarrow}, \ALG^{\leftarrow}\}$ fails to
satisfy the TIF property since there is a dependence on the
arrival ordering.
}

As we observed earlier, the IIF property can be desirable even
in an offline version of the hiring problem. It is therefore
possible to think of fairness as an extra dimension of
complexity on top of the online dimension of the problem.
Each dimension (offline vs.~online and unfair vs.~fair)
constrains the set of algorithms that can be used in
different ways and degrade performance as measured by the
expected value of the person hired. Akin to a prophet
inequality, one can ask what is the best competitive ratio
achievable when comparing algorithms that have to
satisfy different kinds of constraints. For example, how
much more powerful are offline IIF algorithms (IIF prophets)
compared to IIF stopping rules? Or, how much more powerful
are general stopping rules without fairness considerations
compared to IIF stopping rules? What about compared to TIF
stopping rules?

\macomment{Maybe give the extended definition of competitive
ratio here.}

In the sections that follow, we address those
questions for interesting combinations of settings in order
to understand the trade-offs in performance that arise
when different kinds of constraints are imposed on the
decision maker.

\section{A Characterization of Fair Stopping Rules}
\label{sec:characterization}

We now turn our attention to designing optimal, fair
(IIF/TIF) stopping rules, which will eventually lead us to a
unified approach to designing algorithms with good
competitive ratios in multiple settings. The reason this is
a good starting point is that, as we shall see shortly, the
combined requirement for fairness along with the online
nature of the setting induces a good structure to the
set of available algorithms allowing us to compute the
optimal by solving a simple polynomial-size Linear Program
(LP).

\subsection{IIF Constraints}

The following lemmas formalize the observation that the probability
functions in the definition of the IIF property characterize
both the performance as well as the structure of the
stopping rule satisfying that property.

\begin{lemma}
\label{lem:iif-obj}
Consider an instance $\I = ((\F_i)_{i = 1}^n, \pi)$ with
an attached arrival ordering.
All stopping rules $\ALG$ for this instance
satisfying the IIF property with probability function
$p(x)$ have the \emph{same} expected performance which is
given by the expression:

\begin{equation}
\label{eq:iif-obj}
\expect[ \ALG ] = \sum_{i = 1}^n \sum_{x \in \Support}
x \cdot f_i(x) \cdot p(x)
\end{equation}
\end{lemma}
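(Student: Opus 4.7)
The plan is to expand $\expect[\ALG]$ by linearity, decomposing the value of the hired candidate as a sum over the event of hiring each individual candidate, and then to condition on the hired candidate's realized value so that the IIF probability function can be plugged in. Because at most one candidate is hired, we can write the random variable $\ALG$ (as defined in the preliminaries) as $\ALG = \sum_{i=1}^n X_i \cdot \ind[\ALG \text{ hires } i]$, with the empty sum convention covering the case where no candidate is hired.

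Applying linearity of expectation, I would then break up each term according to the value that $X_i$ takes:
\[
\expect[\ALG] \;=\; \sum_{i=1}^n \expect\bigl[X_i \cdot \ind[\ALG \text{ hires } i]\bigr]
\;=\; \sum_{i=1}^n \sum_{x \in \Support} x \cdot \Pr[X_i = x] \cdot \Pr[\ALG \text{ hires } i \mid X_i = x].
\]
Here I am using that $X_i$ is supported on $\Support$ (which is discrete, by the standing assumption at the end of \Cref{sec:prelim}) and that the event ``$\ALG$ hires $i$'' is well-defined jointly with $X_i$.

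The next step is the one piece of content in the proof: the IIF hypothesis says exactly that the inner conditional probability equals $p(x)$, independent of $i$. Substituting $p(x)$ for $\Pr[\ALG \text{ hires } i \mid X_i = x]$ and recalling $f_i(x) := \Pr[X_i = x]$ yields the claimed formula \eqref{eq:iif-obj}. Since the right-hand side depends only on the marginal densities $f_i$ and the shared probability function $p$, any two IIF stopping rules on the same instance that share the same $p$ must have the same expected value.

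I don't anticipate a real obstacle here — the statement is essentially a reformulation of the IIF definition in light of linearity of expectation. The only subtle point to be careful about is justifying $\ALG = \sum_i X_i \ind[\ALG \text{ hires } i]$; this relies on the paper's convention that at most one candidate is hired and that $\ALG$ equals the hired candidate's value (or zero if none is hired). Everything else is bookkeeping.
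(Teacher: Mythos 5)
Your proposal is correct and follows essentially the same route as the paper's proof: decompose $\ALG = \sum_i X_i \cdot \ind[\ALG \text{ hires } i]$, apply linearity of expectation, condition on $X_i = x$, and substitute the IIF probability function $p(x)$. No gaps.
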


\begin{proof}
\begin{align*}
\expect[\ALG]
&= \expect\left[ \sum_{i = 1}^n X_i \cdot \ind[\ALG \text{ hires
} i]\right]\\
&= \sum_{i = 1}^n \sum_{x \in \Support} \expect[X_i \cdot
\ind[\ALG \text{ hires } i] \mid X_i = x] \cdot \Pr[X_i =
x]\\
&= \sum_{i = 1}^n \sum_{x \in \Support} x \cdot \Pr[\ALG
\text{ hires } i \mid X_i = x] \cdot \Pr[X_i = x]\\
&= \sum_{i = 1}^n \sum_{x \in \Support} x \cdot p(x)
\cdot f_i(x)
\end{align*}
\end{proof}

\begin{algorithm}
\caption{IIF Stopping Rule}
\label{alg:iif-lp}
\SetKwInOut{Input}{Parameters}
\Input{$\I = ( (\F_i)_{i = 1}^n, \pi ), p(\cdot)$}
\For{$t = 1, \ldots, n$}{
$i \gets \pi(t)$.\\
{\bf Inspect} $X_i$ and let $x \gets X_i$.\\
$Q_t \gets 1 - \displaystyle \sum_{k = 1}^{t-1}
\sum_{y \in \Support} f_{\pi(k)}(y) \cdot p(y)$\\
Flip a coin with Heads probability equal to 
$\displaystyle q_t(x) = \frac{p(x)}{Q_t}$.\\
\uIf{coin comes up \emph{Heads}}{
{\bf Hire} $i$ and \emph{halt}.
}\uElse{
{\bf Reject} and proceed.
}
}
\end{algorithm}

\begin{lemma}[IIF Structural Lemma]
\label{lem:iif-structural}
A stopping rule $\ALG$ for an instance $\I =
((\F_i)_{i = 1}^n, \pi)$ is IIF with probability function
$p(x)$ if and only if $p$ satisfies:

\begin{equation}
\label{eq:iif-ineq}
p(x) + \sum_{k = 1}^{n-1} \sum_{y \in \Support}
p(y) \cdot f_{\pi(k)}(y) \le 1,\quad \forall x \in
\Support
\end{equation}

Moreover, if $p$ is a function satisfying the above
inequality, then \Cref{alg:iif-lp} (parameterized by
$p$) satisfies the IIF property with the same
probability function $p$.
\end{lemma}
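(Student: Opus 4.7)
The plan is to prove the two directions separately: first that any IIF stopping rule forces its probability function to satisfy \eqref{eq:iif-ineq} (necessity), and then that \Cref{alg:iif-lp} is a well-defined IIF stopping rule whenever $p$ satisfies \eqref{eq:iif-ineq} (sufficiency/constructive direction).

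For necessity, I would focus on the last candidate in the arrival order, $i^{\star} = \pi(n)$. Since $\ALG$ makes at most one hire, the event ``$\ALG$ hires $\pi(n)$'' is contained in the event ``$\ALG$ made no hire in the first $n-1$ steps.'' Conditioning on $X_{\pi(n)} = x$ and using independence across candidates (the decision to hire at steps $1,\dots,n-1$ depends only on $X_{\pi(1)},\dots,X_{\pi(n-1)}$ and $\ALG$'s internal randomness, all independent of $X_{\pi(n)}$), this bound becomes
\[
p(x) \;=\; \Pr[\ALG \text{ hires } \pi(n) \mid X_{\pi(n)} = x] \;\le\; \Pr[\ALG \text{ makes no hire in the first } n{-}1 \text{ steps}].
\]
Expanding the right-hand side using IIF gives $\Pr[\ALG \text{ hires } \pi(k)] = \sum_{y} f_{\pi(k)}(y) p(y)$, so the complement is exactly $1 - \sum_{k=1}^{n-1} \sum_y f_{\pi(k)}(y)p(y)$, which rearranges to \eqref{eq:iif-ineq}.

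For sufficiency, I would first verify that \Cref{alg:iif-lp} is well-defined, i.e.\ that $q_t(x) \in [0,1]$. The quantity $Q_t$ is non-increasing in $t$ (each summand $\sum_y f_{\pi(k)}(y) p(y)$ is non-negative), so $Q_t \ge Q_n = 1 - \sum_{k=1}^{n-1}\sum_y f_{\pi(k)}(y)p(y)$. Hypothesis \eqref{eq:iif-ineq} says precisely $p(x) \le Q_n$, so $q_t(x) = p(x)/Q_t \le p(x)/Q_n \le 1$, and also $Q_t > 0$ whenever there is an $x$ with $p(x) > 0$ reachable at step $t$. Then I would prove by induction on $t$ that $\Pr[\ALG \text{ reaches step } t] = Q_t$: the base case is $Q_1 = 1$, and the inductive step uses
\[
\Pr[\text{hire at step } t] \;=\; Q_t \cdot \expect_{x \sim \F_{\pi(t)}}\!\bigl[q_t(x)\bigr] \;=\; \sum_y f_{\pi(t)}(y) p(y),
\]
so $\Pr[\text{reach step } t{+}1] = Q_t - \sum_y f_{\pi(t)}(y)p(y) = Q_{t+1}$.

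With the induction in hand, the IIF property follows by noting that the event ``$\ALG$ reaches step $t$'' depends only on $X_{\pi(1)},\dots,X_{\pi(t-1)}$ and independent coin flips, hence is independent of $X_{\pi(t)}$. Therefore
\[
\Pr[\ALG \text{ hires } \pi(t) \mid X_{\pi(t)} = x] \;=\; \Pr[\text{reach step } t] \cdot q_t(x) \;=\; Q_t \cdot \frac{p(x)}{Q_t} \;=\; p(x),
\]
exactly as required. The only subtlety worth being careful about is the well-definedness check for $q_t$: one must invoke monotonicity of $Q_t$ in $t$ in order to reduce the infinitely many constraints ``$p(x) \le Q_t$ for every $t$'' to the single constraint \eqref{eq:iif-ineq} at $t=n$. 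This is the main (mild) obstacle; the rest is bookkeeping with conditional probabilities and independence.
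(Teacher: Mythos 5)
Your proposal is correct and follows essentially the same route as the paper's proof: for necessity, bounding $p(x)$ by the probability of surviving to the final step (the paper derives $p(x) \le Q_t$ for all $t$ and then reduces to $t=n$ by monotonicity, while you go directly to $t=n$), and for sufficiency, an induction on $t$ showing the probability of reaching step $t$ equals $Q_t$ so that $q_t(x)\cdot Q_t = p(x)$. Your explicit checks of the independence of the ``reach step $t$'' event from $X_{\pi(t)}$ and of the well-definedness of $q_t(x)\in[0,1]$ are details the paper leaves implicit, but they do not change the argument.
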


\begin{proof}
For the ``only if'' direction, the key observation is that
for the probabilities $p(x)$ to be valid, they have to
be such that the probability of hiring $X_{\pi(t)}$
conditional on sampling any value $x \in \Support$ is
bounded above by the probability of reaching time step
$t$, because you cannot hire someone with probability
greater than that of reaching them in the first place.

More formally, let $Q_t$ be the probability of
reaching time step $t$ computed as follows:

\begin{align*}
Q_t &= \Pr[\ALG \text{ rejects } \pi(1),
\ldots, \pi(t-1)]
= 1 - \Pr\left[ \bigcup_{k = 1}^{t-1} \ALG \text{ accepts
} \pi(k) \right]\\
&= 1 - \sum_{k = 1}^{t-1} \sum_{y \in \Support}
\Pr[\ALG \text{ accepts } \pi(k) \mid X_{\pi(k)} =
y ] \cdot \Pr[ X_{\pi(k)} = y ]\\
&= 1 - \sum_{k = 1}^{t-1} \sum_{y \in \Support} p(y)
\cdot f_{\pi(k)}(y)
\end{align*}

\noindent
where we used the fact that a stopping rule hires at most
one candidate to assert that the events in the union are
disjoint.

We thus need to require $Q_t \ge p(x)$ for all $t \in
[n], x \in \Support$. Notice however that $Q_t$ is
non-negative, decreasing in $t \in [n]$ and independent of
$x$ therefore it suffices to require the inequality for $t =
n$, which gives the required condition in the statement of
the Lemma:

\[ p(x) \le 1 - \sum_{k = 1}^{n-1} \sum_{y \in
\Support} p(y) \cdot f_{\pi(k)}(y),\quad \forall x \in
\Support \]

For the ``if'' direction, it suffices to prove that
\Cref{alg:iif-lp} satisfies the IIF property with
probability function $p(x)$ if we are given that $p$
satisfies inequalities (\ref{eq:iif-ineq}).
We do this by induction on the time step $t \in [n]$.
For the base case, $Q_1$ is set to $1$ in the
algorithm, therefore $q_1(x) =
p(x)$ and we're done.
For the inductive step, consider
any time step $t > 1$.
The conditional probability of hiring
candidate $\pi(t)$ is as follows:

\begin{align*}
\Pr[\text{\Cref{alg:iif-lp} hires } \pi(t) \mid
X_{\pi(t)} = x ]
=& \Pr[\text{\Cref{alg:iif-lp} hires } \pi(t) \mid
X_{\pi(t)} = x, \text{ have reached time step } t]\\
&\cdot \Pr[\text{\Cref{alg:iif-lp} rejects } \pi(1),
\ldots, \pi(t-1)]
\end{align*}

The first factor in the product is exactly the probability
$q_t(x)$ of the coin used by \Cref{alg:iif-lp}.
Using the inductive hypothesis, and a similar computation to
the ``only if'' direction, we can express the probability of
\Cref{alg:iif-lp} rejecting the first $t-1$ candidates in
terms of the p.d.f.s of each $X_i$ and the probabilities
$p(x)$, getting $\Pr[\text{\Cref{alg:iif-lp} rejects }
\pi(1), \ldots, \pi(t-1)] = Q_t$, where $Q_t$ is
exactly the quantity computed in the for-loop.
Hence,
$ \Pr[\text{\Cref{alg:iif-lp} hires } X_{\pi(t)} \mid
X_{\pi(t)} = x] = q_t(x) \cdot Q_t = \frac{p(x)}{Q_t}
\cdot Q_t = p(x)$
\end{proof}

A consequence of the structural lemma is that it allows us
to reduce the problem of designing a fair stopping rule into
a simple LP.

Indeed, for IIF stopping rules, \Cref{lem:iif-structural}
states that the set of all functions $p(x)$ for which there
exist an IIF stopping rule with that probability function
matches exactly the set of all functions $p(x)$ that satisfy
inequalities (\ref{eq:iif-ineq}). Further,
\Cref{lem:iif-obj} states that the optimal stopping rule
would be the one maximizing a linear function of the
probabilities $p(x)$. Therefore, the solution to the
following Linear Program gives the conditional hiring
probabilities of the \emph{optimal, IIF} stopping rule for a
given instance $\I = ((\F_i)_{i = 1}^n, \pi)$:

\begin{equation}
\tag{OPT Online IIF}
\label{lp:online-iif}
\left\{
\begin{array}{lcl}
\max & \displaystyle \sum_{i = 1}^n \sum_{x \in S} x \cdot
 f_i(x)  \cdot p(x) &\\
\text{s.t.} & \displaystyle p(x) + \sum_{k = 1}^{n-1}
\sum_{y \in S} f_{\pi(k)}(y) \cdot p(y)\le 1, & \forall x \in S.\\
& p(x) \in [0, 1] & \forall x \in S.
\end{array}
\right.
\end{equation}

After solving (\ref{lp:online-iif}), one only has to
use the solution to parameterize \Cref{alg:iif-lp} which
becomes an optimal, fair stopping rule. We have thus proved
the following theorem.

\begin{theorem}
The optimal IIF stopping rule for a given instance $\I =
((\F_i)_{i = 1}^n, \pi)$ can be computed in time polynomial
in $n, |\Support|$ by solving the (\ref{lp:online-iif}).
\end{theorem}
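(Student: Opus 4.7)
The plan is to package together the two preceding lemmas and a standard LP-solvability argument. By Lemma~\ref{lem:iif-structural}, the set of functions $p : \Support \to [0,1]$ that are realizable as the probability function of \emph{some} IIF stopping rule on the instance $\I$ is exactly the feasible set of (\ref{lp:online-iif}); moreover, given any feasible $p$, Algorithm~\ref{alg:iif-lp} instantiates an IIF stopping rule whose probability function equals $p$. By Lemma~\ref{lem:iif-obj}, every IIF stopping rule with probability function $p$ has the same expected value, and this value equals the objective of (\ref{lp:online-iif}). Hence the maximum of (\ref{lp:online-iif}) equals the optimum over all IIF stopping rules, and any optimal $p^\star$ together with Algorithm~\ref{alg:iif-lp} yields an optimal IIF stopping rule.

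The remaining step is the runtime claim. The LP has exactly $|\Support|$ variables (one $p(x)$ per $x \in \Support$) and $O(|\Support|)$ constraints, each with $O(n\,|\Support|)$ nonzero coefficients; hence its encoding has size polynomial in $n$ and $|\Support|$, and any polynomial-time LP algorithm (e.g.\ the ellipsoid or interior-point method) returns an optimal $p^\star$ in time polynomial in $n, |\Support|$. Computing the coefficients $f_{\pi(k)}(y)$ used in the objective and constraints is a matter of reading off the input distributions, which is also polynomial.

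Finally, the ``algorithm'' produced in the statement is Algorithm~\ref{alg:iif-lp} parameterized by $p^\star$; the preprocessing computes $Q_t$ for each $t$ (a sum of $O(n\,|\Support|)$ already-tabulated products) and then each step performs a single coin flip with head-probability $p^\star(x)/Q_t$. So the per-instance setup and per-step execution are both polynomial in $n$ and $|\Support|$. No step here is a real obstacle; the only thing to be mildly careful about is citing Lemma~\ref{lem:iif-structural} in both directions, so that the LP's feasible set is shown to coincide with (not merely contain) the set of realizable probability functions, and therefore solving the LP gives a truly optimal, not merely near-optimal, IIF rule.
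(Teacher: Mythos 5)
Your proposal is correct and follows essentially the same route as the paper: it combines \Cref{lem:iif-structural} (feasible set of (\ref{lp:online-iif}) $=$ realizable IIF probability functions, with \Cref{alg:iif-lp} as the realization) with \Cref{lem:iif-obj} (LP objective $=$ expected value of any such rule), then invokes polynomial-time LP solvability. The only difference is that you spell out the LP's size and the runtime of parameterizing \Cref{alg:iif-lp}, details the paper leaves implicit.
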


\subsection{TIF Constraints}

Similarly to IIF fairness, we proceed to prove similar
structural lemmas for TIF stopping rules.

\begin{lemma}
Consider an instance $\I = (\F_i)_{i = 1}^n$.  Any
stopping rule $\ALG^{\pi}$ for $\I$ satisfying the TIF
property with probability function $p(i, x)$ has the
expected performance given by the expression:

\[ \expect[ \ALG^{\pi} ] = \sum_{i = 1}^n \sum_{x \in \Support}
x \cdot f_i(x) \cdot p(i, x) \]

which is independent of the specific permutation $\pi$ and
the specifics of the stopping rule.
\end{lemma}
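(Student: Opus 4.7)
The plan is to mirror the proof of \Cref{lem:iif-obj} almost verbatim, since the TIF condition differs from the IIF condition only in that the probability function is allowed to depend on the candidate's identity $i$ in addition to the realized value $x$. The linearity-of-expectation decomposition of $\expect[\ALG^\pi]$ used there does not make any use of the identity-independence of $p$, so the same chain of equalities should carry through with $p(x)$ replaced throughout by $p(i,x)$.

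Concretely, first I would write
\[ \expect[\ALG^\pi] = \expect\!\left[\, \sum_{i=1}^n X_i \cdot \ind[\ALG^\pi \text{ hires } i] \,\right] = \sum_{i=1}^n \expect\!\left[ X_i \cdot \ind[\ALG^\pi \text{ hires } i] \right], \]
using that at most one candidate is hired so the summands correspond to disjoint events. Next, I would condition on the value of $X_i$: partitioning the support of $X_i$ gives
\[ \expect[ X_i \cdot \ind[\ALG^\pi \text{ hires } i] ] = \sum_{x \in \Support} x \cdot \Pr[\ALG^\pi \text{ hires } i \mid X_i = x] \cdot \Pr[X_i = x]. \]
Finally, I would invoke the TIF property (\Cref{defn:tif}) to replace $\Pr[\ALG^\pi \text{ hires } i \mid X_i = x]$ with $p(i,x)$, and write $\Pr[X_i=x]=f_i(x)$, yielding the claimed closed form.

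Once the expression is derived, the ``independence of $\pi$ and of the particular stopping rule'' claim is immediate: the right-hand side depends on the stopping rule and the permutation only through the function $p(i,x)$, which by the TIF assumption is the same for every $\pi \in S_n$ and every stopping rule in the family with this probability function. No step looks delicate; the only thing to be careful about is justifying the exchange of sum and expectation (which is fine because everything is non-negative, or because $\Support$ is assumed discrete and a hire happens at most once), and remembering that the TIF definition quantifies $p(i,x)$ over all $\pi$, which is precisely what lets us strip the $\pi$-dependence from the final expression.
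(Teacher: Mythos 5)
Your proposal is correct and matches the paper's intended argument exactly: the paper omits this proof, stating that it uses the same probabilistic calculations as \Cref{lem:iif-obj} with $p(i,x)$ in place of $p(x)$, which is precisely the substitution you carry out. The observation that $\pi$-independence follows because the TIF definition fixes a single $p(i,x)$ across all permutations is also the right (and only) additional point needed.
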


The proof of the lemma is omitted as it uses the same
probabilistic calculations as \Cref{lem:iif-obj}, the only
difference being that $p(i, x)$ is now in place of
$p(x)$.

\begin{algorithm}
\caption{TIF Stopping Rule}
\label{alg:tif-lp}
\KwData{$\I = ((\F_i)_{i=1}^n, \pi), p(\cdot, \cdot)$}
\For{$t = 1, \ldots, n$}{
$i \gets \pi(t)$.\\
{\bf Inspect} $X_i$ and let $x \gets X_i$.\\
$Q^{\pi}_t \gets 1 - \displaystyle \sum_{k = 1}^{t-1}
\sum_{y \in \Support} f_{\pi(k)}(y) \cdot p(\pi(k), y)$\\
Flip a coin with Heads probability equal to 
$\displaystyle q^{\pi}_t(x) = \frac{p(i, x)}{Q^{\pi}_t}$.\\
\uIf{coin comes up \emph{Heads}}{
{\bf Hire} $i$ and halt.
}\uElse{
{\bf Reject} $i$ and proceed.
}
}
\end{algorithm}

\begin{lemma}[TIF Structural Lemma]
\label{lem:tif-structural}
A family of stopping rules $\{\ALG^{\pi}\}_{\pi \in
S_n}$ for an instance $\I = (\F_i)_{i = 1}^n$ satisfies
the TIF property with probability function $p(i, x)$ if and
only if $p$ satisfies:

\begin{equation}
\label{eq:tif-no-perm}
p(i, x) + \sum_{k \neq i} \sum_{y \in \Support} p(k, y)
\cdot f_{k}(y) \le 1,\quad \forall i \in [n], x \in \Support
\end{equation}

Moreover, if $p$ is a function satisfying the above
inequality, then the family defined by \Cref{alg:tif-lp} for
all permutations $\pi \in S_n$ satisfies the TIF property
with the same probability function $p$.
\end{lemma}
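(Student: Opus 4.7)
The plan is to mirror the proof of \Cref{lem:iif-structural}, with the additional twist that here TIF must be verified \emph{simultaneously} across the entire family $\{\ALG^\pi\}_{\pi \in S_n}$, which is why the hypothesized inequality is permutation-free.

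For the ``only if'' direction, I fix an arbitrary $i \in [n]$ and $x \in \Support$, and choose the permutation $\pi$ in which candidate $i$ arrives last, i.e.~$\pi(n) = i$. Under TIF, the unconditional hiring probability of every other candidate $k$ expands as $\Pr[\ALG^\pi \text{ hires } k] = \sum_{y} p(k, y) f_k(y)$, a quantity independent of $\pi$ by exactly the computation in \Cref{lem:iif-obj}. Summing over $k \neq i$ and subtracting from $1$, the probability that $\ALG^\pi$ reaches its final step equals $1 - \sum_{k \neq i}\sum_y p(k,y) f_k(y)$. Since the event of reaching $i$ depends only on $X_{\pi(1)}, \ldots, X_{\pi(n-1)}$, which are independent of $X_i$, the conditional hiring probability $p(i,x)$ is bounded above by this ``reach'' probability, yielding exactly inequality~(\ref{eq:tif-no-perm}).

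For the ``if'' direction, assuming $p$ satisfies~(\ref{eq:tif-no-perm}) I fix any $\pi \in S_n$ and show by induction on $t$ that $\Pr[\text{\Cref{alg:tif-lp} hires } \pi(t) \mid X_{\pi(t)} = x] = p(\pi(t), x)$. The base case $t = 1$ is immediate since $Q_1^\pi = 1$, making the coin flip heads with probability exactly $p(\pi(1), x)$. For the inductive step, the inductive hypothesis implies that the algorithm's computed $Q_t^\pi$ coincides with the true probability of reaching step $t$, so the conditional hiring probability telescopes to $q_t^\pi(x) \cdot Q_t^\pi = p(\pi(t), x)$. The one subtlety is verifying $q_t^\pi(x) \in [0,1]$: writing $i = \pi(t)$, the indices $\pi(1), \ldots, \pi(t-1)$ subtracted inside $Q_t^\pi$ form a subset of $[n] \setminus \{i\}$, hence $Q_t^\pi \ge 1 - \sum_{k \neq i}\sum_y p(k,y) f_k(y) \ge p(i,x)$ by~(\ref{eq:tif-no-perm}).

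The conceptually novel step, and essentially the only place where the TIF argument diverges from the IIF one, is the observation that the tightest feasibility constraint is imposed by the permutation placing $i$ last, since this maximizes the probability that earlier candidates have already been hired. Once this ``worst-permutation'' reduction is made explicit, the rest of the proof is a direct transcription of \Cref{lem:iif-structural} with $p(i,x)$ in place of $p(x)$ and with the $\pi$-dependent quantity $Q_t^\pi$ tracked throughout.
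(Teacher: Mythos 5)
Your proposal is correct and follows essentially the same route as the paper: computing the reach probability $Q^\pi_t$, observing that the binding feasibility constraint comes from the permutation placing $i$ last, and running the same induction for the converse. The only (cosmetic) difference is that you replace the paper's explicit equivalence between the permutation-indexed constraints and~(\ref{eq:tif-no-perm}) --- proved there via a swap of the $t$-th and $n$-th positions --- with the direct observation that $\{\pi(1),\dots,\pi(t-1)\}\subseteq [n]\setminus\{i\}$ and the subtracted terms are non-negative, which is the same monotonicity fact stated more economically.
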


\begin{proof}
For the ``only if'' direction we proceed as in the IIF case
by first expressing the probability of rejecting the first
$t-1$ candidates under some arrival ordering $\pi$.

\begin{align*}
Q^{\pi}_t &= \Pr[\ALG^{\pi} \text{ rejects } \pi(1),
\ldots, \pi(t-1)]
= 1 - \Pr\left[ \bigcup_{k = 1}^{t-1} \ALG^{\pi} \text{
  accepts } \pi(k) \right]\\
&= 1 - \sum_{k = 1}^{t-1} \sum_{y \in \Support}
\Pr[\ALG^{\pi} \text{ accepts } \pi(k) \mid X_{\pi(k)} =
y ] \cdot \Pr[ X_{\pi(k)} = y ]\\
&= 1 - \sum_{k = 1}^{t-1} \sum_{y \in \Support} p(\pi(k), y)
\cdot f_{\pi(k)}(y)
\end{align*}

\noindent
where we used the fact that a stopping rule hires at most
one candidate to assert that the events in the union are
disjoint.

We then require that the conditional hiring probability is
bounded by the arrival probability at time step $t$:

\begin{equation}
\label{eq:tif-perm}
p(\pi(t), x) \le Q^{\pi}_t,\quad \forall t \in
[n], \forall x \in \Support \text{ and } \forall \pi \in
S_n
\end{equation}

Unlike the IIF case which considered a single arrival
permutation, the TIF property applies to the whole family of
stopping rules and the constraint we just described has to
apply for every such permutation. However, we can simplify
it and prove that the set of inequalities in
(\ref{eq:tif-no-perm}) is satisfied if and only if the set of
inequalities (\ref{eq:tif-perm}) is satisfied.
The direction ``$(\ref{eq:tif-perm}) \Rightarrow
(\ref{eq:tif-no-perm})$'' follows easily by taking $\pi$ to be
any permutation such that $\pi(n) = i$. For the other
direction, assume inequalities (\ref{eq:tif-no-perm}) are all
satisfied, let $\pi \in S_n$ be arbitrary permutation, 
fix arbitrary $t \in [n], x \in \Support$ and denote $i
= \pi(t)$. Define permutation $\pi'$ which is derived from
$\pi$ by swapping the $t$-th and $n$-th elements. More
precisely,

\[ \pi'(j) = \left\{
  \begin{array}{ll}
    \pi(n), & j = t\\
    i = \pi(t), & j = n\\
    \pi(j),& \text{o.w.}
  \end{array}
  \right.
\]

Inequality (\ref{eq:tif-no-perm}) for $i, x$ states that:
$ p(i, x) \le 1 - \sum_{j \neq i} \sum_{y \in \Support}
p(j, y) \cdot f_j(y)$
or, equivalently expressed using permutation $\pi'$,
$ p(\pi'(n), x) \le 1 - \sum_{k = 1}^{n-1} \sum_{y \in \Support}
p(\pi'(k), y) \cdot f_{\pi'(k)}(y) $

The right hand side of the above inequality cannot
decrease if we make the outer summation range from $1$ to
$t-1$ instead so,
$ p(\pi'(n), x) \le 1 - \sum_{k = 1}^{t-1} \sum_{y \in \Support}
p(\pi'(k), y) \cdot f_{\pi'(k)}(y) $

Since $\pi'$ agrees with $\pi$ for all $j < t$, the above is
equivalent to:
$ p(\pi(t), x) \le 1 - \sum_{k = 1}^{t-1} \sum_{y \in \Support}
p(\pi(k), y) \cdot f_{\pi(k)}(y) $
which is exactly inequality (\ref{eq:tif-perm}).

Now for the ``if'' direction of the lemma, we prove that
the family defined by \Cref{alg:tif-lp} for all $\pi \in
S_n$ satisfies the TIF property with
probability function $p(i, x)$ again using induction
on the time step $t \in [n]$.

The base case is again trivial with $Q^{\pi}_1 = 1$
therefore $q^{\pi}_1(x) = p(i, x)$.

For the inductive step, consider
time step $t > 1$.
The conditional probability of $\ALG^{\pi}$ hiring
candidate $\pi(t)$ is:

\begin{align*}
\Pr[\text{\Cref{alg:tif-lp} hires } \pi(t) \mid
X_{\pi(t)} = x ]
=& \Pr[\text{\Cref{alg:tif-lp} hires } \pi(t) \mid
X_{\pi(t)} = x, \text{ have reached time step } t]\\
&\cdot \Pr[\text{\Cref{alg:tif-lp} rejects } \pi(1),
\ldots, \pi(t-1)]
\end{align*}
The first factor in the product is exactly the probability
$q^{\pi}_t(x)$ of the coin used by \Cref{alg:tif-lp}.
Using the inductive hypothesis, we can express the probability of
\Cref{alg:tif-lp} rejecting the first $t-1$ candidates in
terms of the p.d.f.s of each $X_i$ and the probabilities
$p(i, x)$, getting $\Pr[\text{\Cref{alg:tif-lp} rejects }
X_{\pi(1)}, \ldots, X_{\pi(t-1)}] = Q^{\pi}_t$, where
$Q^{\pi}_t$ is exactly the quantity computed in the
for-loop.  Hence,
\[ \Pr[\text{\Cref{alg:tif-lp} hires }
\pi(t) \mid X_{\pi(t)} = x] = q^{\pi}_t(x) \cdot
Q^{\pi}_t = \frac{p(i, x)}{Q^{\pi}_t} \cdot Q^{\pi}_t = p(i,
x) \]
\end{proof}

Similarly to the IIF case, we can express the probability
function $p(i, x)$ of the optimal family of TIF stopping
rules as the solution to the following LP:

\begin{equation}
\tag{OPT Online TIF}
\label{lp:online-tif}
\left\{
\begin{array}{lcl}
\max & \displaystyle \sum_{i = 1}^n \sum_{x \in \Support} x
\cdot p(i, x) \cdot f_i(x) &\\
\text{s.t.} & \displaystyle p(i, x) + \sum_{k \ne i} \sum_{y
\in \Support} f_k(y) \cdot p(k, y) \le 1, & \forall i \in
[n], \forall x \in \Support.\\
& p(i, x) \in [0, 1] & \forall i \in [n], \forall x \in
\Support.
\end{array}
\right.
\end{equation}

\begin{theorem}
The optimal TIF family of stopping rules for a given instance
$\I = (\F_i)_{i = 1}^n$ can be
computed in time polynomial in $n, |\Support|$ by solving
(\ref{lp:online-tif}).
\end{theorem}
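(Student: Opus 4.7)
The plan is to use the two preceding lemmas as black boxes, mirroring the reduction carried out at the end of the IIF subsection. The (unnamed) objective lemma tells us that the expected performance of any TIF family with probability function $p(i,x)$ equals the linear functional $\sum_{i}\sum_{x\in\Support} x \cdot f_i(x)\cdot p(i,x)$; the TIF Structural Lemma (\Cref{lem:tif-structural}) tells us that the set of probability functions realizable by some TIF family is \emph{exactly} the polytope cut out by the inequalities (\ref{eq:tif-no-perm}) together with the box constraints $p(i,x)\in[0,1]$. Putting the two together, maximizing the expected value over TIF families is exactly the linear program (\ref{lp:online-tif}).

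From here the argument has three short steps. First, I would note that (\ref{lp:online-tif}) is a feasible, bounded LP: $p\equiv 0$ is feasible, and the objective is upper bounded by $\sum_i \expect[X_i]$. Second, I would bound its size: it has $n\cdot|\Support|$ variables and $n\cdot|\Support|$ non-box constraints, with coefficients $f_i(y)$ and objective coefficients $x\cdot f_i(x)$, all of which are part of the input description; hence any standard polynomial-time LP solver (e.g.\ the ellipsoid or interior-point method) produces an optimal solution $p^*(i,x)$ in time polynomial in $n$ and $|\Support|$. Third, I would feed $p^*$ into \Cref{alg:tif-lp} for each $\pi\in S_n$; by the ``if'' direction of \Cref{lem:tif-structural}, the resulting family $\{\ALG^{\pi}\}_{\pi\in S_n}$ is TIF with probability function $p^*$, and by the objective lemma it attains expected value equal to the LP optimum. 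Optimality among \emph{all} TIF families then follows because the ``only if'' direction of \Cref{lem:tif-structural} guarantees that every competing TIF family corresponds to a feasible point of (\ref{lp:online-tif}) and hence cannot exceed the LP optimum.

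There is essentially no technical obstacle here: the structural lemma has already done the heavy lifting by converting a constraint about families of stopping rules indexed by $S_n$ (exponentially many permutations) into the polynomial-sized set of linear constraints (\ref{eq:tif-no-perm}). The one place that deserves an explicit sentence in the write-up is that \Cref{alg:tif-lp} only needs to be instantiated lazily for the specific $\pi$ that arises at run time — we never materialize all $n!$ algorithms — so producing the ``optimal TIF family'' really means producing the table $p^*(i,x)$ together with the generic procedure \Cref{alg:tif-lp}. With that clarification the theorem follows immediately.
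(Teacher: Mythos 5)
Your proposal is correct and matches the paper's (implicit) argument exactly: the paper treats this theorem as an immediate consequence of the TIF objective lemma and the TIF Structural Lemma, precisely the reduction you spell out. Your added remarks on the LP's size and on lazily instantiating \Cref{alg:tif-lp} per permutation are sensible elaborations of the same approach rather than a different route.
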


\subsection{Offline Relaxation}
\label{sec:offline}

The following LP relaxation of the offline problem
(i.e.~prophet) is going to be useful in later sections.

\begin{lemma}
Let $\I = (\F_i)_{i = 1}^n$ be an instance of the hiring
problem. Let $\OptOff$ be the value of the candidate hired
by the optimal offline algorithm for instance $\I$ and let
$C^*$ be the optimal solution to the following LP:

\begin{equation}
\tag{Offline Relaxation}
\label{lp:offline}
\left\{
\begin{array}{lcl}
\max & \displaystyle \sum_{i = 1}^n \sum_{x \in S} x \cdot
p_{ix} \cdot f_i(x) &\\
\text{s.t.} & \displaystyle \sum_{i = 1}^n \sum_{x \in S}
p_{ix} \cdot f_i(x) \le 1\\
& p_{ix} \in [0, 1] & \forall i \in [n], \forall x \in S.
\end{array}
\right.
\end{equation}

Then, \[ \expect[\OptOff] \le C^* \]
\end{lemma}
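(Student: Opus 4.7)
The plan is to exhibit a feasible LP solution whose objective value equals $\expect[\OptOff]$ exactly. Fix any optimal offline algorithm; call it $\OptOff$. For each $i \in [n]$ and $x \in \Support$, define
\[
p_{ix} \;=\; \Pr[\OptOff \text{ hires } i \mid X_i = x].
\]
Since these are probabilities, $p_{ix} \in [0,1]$ automatically. The proof then reduces to checking two things: (i) this assignment satisfies the single non-trivial LP constraint, and (ii) the LP objective evaluated on this assignment equals $\expect[\OptOff]$.

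For (i), I would compute the unconditional hiring probabilities: $\Pr[\OptOff \text{ hires } i] = \sum_{x \in \Support} p_{ix}\, f_i(x)$. Because $\OptOff$ hires at most one candidate, the events $\{\OptOff \text{ hires } i\}$ across different $i$ are pairwise disjoint, so their probabilities sum to at most $1$. This gives precisely
\[
\sum_{i=1}^n \sum_{x \in \Support} p_{ix} \cdot f_i(x) \;=\; \sum_{i=1}^n \Pr[\OptOff \text{ hires } i] \;\le\; 1,
\]
matching the LP's feasibility constraint.

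For (ii), I would write $\OptOff = \sum_{i=1}^n X_i \cdot \ind[\OptOff \text{ hires } i]$ and take expectations, conditioning on $X_i = x$ exactly as in the proof of \Cref{lem:iif-obj}. This yields
\[
\expect[\OptOff] \;=\; \sum_{i=1}^n \sum_{x \in \Support} x \cdot f_i(x) \cdot p_{ix},
\]
which is exactly the LP objective at our chosen $\{p_{ix}\}$. Since this is a feasible point for the maximization LP, its objective value is at most $C^*$, giving $\expect[\OptOff] \le C^*$ as required.

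There is no real obstacle here: the argument is essentially bookkeeping, and the only place one has to be slightly careful is in justifying disjointness of the hiring events (which follows from the "at most one candidate" convention) and in ensuring that $p_{ix}$, defined as a conditional probability, is well-defined for all $x \in \Support$ (which holds because every $x \in \Support$ lies in some $\supp(\F_i)$, and for $i$ with $f_i(x) = 0$ the value of $p_{ix}$ is irrelevant since it is multiplied by $f_i(x)$ in both the objective and the constraint). Discreteness of $\Support$, assumed in the preliminaries, makes all the sums over $x$ legitimate without any measure-theoretic fuss.
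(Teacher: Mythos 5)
Your proposal is correct and follows essentially the same route as the paper: both define $p_{ix} = \Pr[\OptOff \text{ hires } i \mid X_i = x]$, verify feasibility via the fact that the expected number of hires is at most $1$, and observe that the LP objective at this point equals $\expect[\OptOff]$, so $C^*$ can only be larger. The extra remarks about disjointness and well-definedness of $p_{ix}$ are fine but not needed beyond what the paper already states.
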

\begin{proof}
Let $p_{ix} = \Pr[\OptOff \text{ hires } i \mid X_i = x]$.
Since any algorithm hires at most one candidate, it follows
that $\expect[\text{\# of candidates hired by } \OptOff] \le
1$. The expectation can be expressed in terms of $p_{ix}$ as
follows:
\begin{align*}
\expect[\text{\# of candidates hired by } \OptOff]
&= \expect\left[\sum_{i = 1}^n \ind[\OptOff \text{ hires }
i]\right]\\
&= \sum_{i = 1}^n \sum_{x \in \Support}
\Pr[\OptOff \text{ hires } i \mid X_i = x]
\cdot \Pr[X_i = x]\\
&= \sum_{i = 1}^n \sum_{x \in \Support} p_{ix} \cdot f_i(x)
\end{align*}

Therefore, the probabilities $p_{ix}$ of $\OptOff$
constitute a feasible solution to the LP and further the
objective value $C$ of the LP for $p_{ix}$ is equal to the
expected performance of $\OptOff$.  Hence, the optimal
solution $p_{ix}^*$ with objective value $C^*$ cannot be any
worse: $C^* \ge C = \expect[\OptOff]$. 
\end{proof}

The special form of this LP allows to prove that there are
optimal solutions with special structure that will allow us
later to transform them into solutions to Online IIF/TIF LPs
presented previously.

\begin{lemma}
\label{lem:offline-structural}
Among the set of optimal solutions to
(\ref{lp:offline}), there is one that
satisfies $p_{ix} = p_{jx}$ for all
$i,j \in [n]$ and all $x \in \Support$.
\end{lemma}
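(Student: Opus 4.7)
The plan is to take any optimal solution $p^*$ to (\ref{lp:offline}) and \emph{symmetrize} it over the candidate index $i$ by replacing, for each value $x$, the numbers $p^*_{1x}, \ldots, p^*_{nx}$ with a single common number. The motivation is the following simple structural observation about the LP: for any fixed $x \in \Support$, the coefficient of variable $p_{ix}$ in the objective is $x \cdot f_i(x)$ while its coefficient in the (unique) packing constraint is $f_i(x)$. The ratio of ``objective gain'' to ``constraint cost'' is therefore $x$, independently of $i$. In other words, the LP is indifferent to which candidate's variable carries the mass at value $x$; only the total amount $\sum_i p_{ix} f_i(x)$ matters.

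Concretely, I would proceed as follows. Given any optimal solution $p^*$, define for each $x \in \Support$
\[
\bar{p}(x) \;=\; \frac{\sum_{i=1}^n p^*_{ix} \, f_i(x)}{\sum_{i=1}^n f_i(x)},
\]
which is well-defined because $x \in \Support$ guarantees $f_i(x) > 0$ for at least one $i$. Then set $\tilde{p}_{ix} := \bar{p}(x)$ for every $i \in [n]$. Feasibility is immediate: $\bar{p}(x)$ is a convex combination of numbers in $[0,1]$, so $\bar{p}(x) \in [0,1]$; and by construction $\sum_i \tilde{p}_{ix} f_i(x) = \sum_i p^*_{ix} f_i(x)$ for each $x$, so summing over $x$ preserves the packing constraint. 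Multiplying the same per-$x$ equality by $x$ and summing shows that the objective value of $\tilde{p}$ equals that of $p^*$, hence $\tilde{p}$ is also optimal and has the desired form $\tilde{p}_{ix} = \tilde{p}_{jx}$ for all $i,j,x$.

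There is essentially no serious obstacle here; the argument is a one-step averaging and the only thing to verify carefully is well-definedness of $\bar{p}(x)$, which is handled by the definition of $\Support$. If desired, the same conclusion can be obtained by a two-variable exchange argument that iteratively equalizes $p^*_{ix}$ and $p^*_{jx}$ while keeping $p^*_{ix} f_i(x) + p^*_{jx} f_j(x)$ invariant, but the weighted-average construction above is cleaner and yields the symmetric solution in one shot.
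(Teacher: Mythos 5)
Your proposal is correct and matches the paper's own proof essentially verbatim: the paper likewise sets $p'_{ix} = w_x/z_x$ with $w_x = \sum_i p_{ix} f_i(x)$ and $z_x = \sum_i f_i(x)$, checks $p'_{ix} \in [0,1]$ via the convex-combination observation, and verifies the constraint and objective by the same per-$x$ equality summed (and $x$-scaled) over $\Support$. No differences worth noting.
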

\begin{proof}
    Let $p = (p_{ix})$ be any optimal solution
    to (\ref{lp:offline}). We will explain 
    how to construct another solution 
    $p' = (p'_{ix})$ that obeys the
    LP constraints and has the same 
    objective value. For any $x \in S$,
    such that $\sum_{i=1}^n p_{ix} f_{ix} = 0$
    we set $p'_{ix}=0$. For any
    $x \in S$ such that 
    $\sum_{i=1}^n p_{ix} f_i(x) = w_x > 0$,
    we let $z_x = \sum_{i=1}^n f_i(x)$ and set
    $p'_{ix} = w_x/z_x$ for all $i$.

    Note that
    $0 \le p'_{ix} \le 1$ because 
    $w_x/z_x$ is the weighted average
    of the numbers $p_{ix}$, weighted by
    $f_i(x)/z_x$, and each $p_{ix}$ belongs
    to $[0,1]$.

    The one constraint
    of (\ref{lp:offline}) 
    is satisfied by $p'$ because 
    \begin{equation}
    \label{eq:structural-1}
    \sum_{i=1}^n p'_{ix} f_i(x) = 
    \frac{w_x}{z_x} \sum_{i=1}^n f_i(x) =
    \frac{w_x}{z_x} \cdot z_x = w_x =
    \sum_{i=1}^n p_{ix} f_i(x)
    \end{equation}
    \noindent
    and summing the above equations over $x$,
    we get that the LHS of the LP constraint
    is the same for $p'$ as for $p$.


    Finally, to verify the 
    equality of the objective function
    values, scale \Cref{eq:structural-1}
    by $x$, then sum over all $x \in S$.
\end{proof}

\section{Competitive Fair Algorithms}
\label{sec:competitive}

In the previous sections we described how to reduce the
problem of computing the optimal IIF/TIF stopping rule for a
given instance to solving an LP.  Here, we use this
reduction to show tight competitive ratios between different
settings (online vs.~offline, unfair vs.~fair etc.). In what
follows, we present the results separately for each kind of
fairness. \maedit{The results are summarized in
\Cref{fig:squares} as competitive ratios between different
settings depicted on top of arcs connecting those settings
and our theorems indicate which arc they correspond to.}

\subsection{Competitive IIF}

\begin{figure}
\centering
\begin{subfigure}{0.5\textwidth}
\begin{tikzpicture}
    \node[draw=none] (A) at (1.5,1.5) {On};
    \node[draw=none] (B) at (1.5, -1.5) {On,IIF};
    \node[draw=none] (C) at (-1.5, -1.5) {Off,IIF};
    \node[draw=none] (D) at (-1.5, 1.5) {Off};

    \path [->] (A) edge node[right] {$=\frac12$} (B);
    \path [->] (C) edge node[below] {$=\frac12$} (B);
    \path [->] (D) edge node[left] {$\ge \frac12$} (C);
    \path [->] (D) edge node[above] {$=\frac12$} (A);

    \path [->] (D) edge node[above] {$=\frac12$} (B);
\end{tikzpicture}
\end{subfigure}
\hfill
\begin{subfigure}{0.4\textwidth}
\begin{tikzpicture}
    \node[draw=none] (A) at (1.5,1.5) {On};
    \node[draw=none] (B) at (1.5, -1.5) {On,TIF};
    \node[draw=none] (D) at (-1.5, 1.5) {Off};

    \path [->] (A) edge node[right] {$=\frac12$} (B);
    \path [->] (D) edge node[above] {$=\frac12$} (A);

    \path [->] (D) edge node[above] {$=\frac12$} (B);
\end{tikzpicture}
\end{subfigure}
\caption{Summary of different \emph{settings} we consider.
Labels on arcs represent competitive ratios: an
arc from setting $A$ to setting $B$ labeled ``$\ge r$''
(resp.~``$= r$'') means a competitive ratio bound of the form:
for any instance $(\F_i)_{i = 1}^n$, let $\OPT^A, \OPT^B$
be the optimal algorithms on this instance for the
respective settings, then
$\expect\left[\OPT^B\right] / \expect\left[\OPT^A\right] \ge
r$ (resp.~the bound is tight). The settings are named in the
form $X[,Y]$, where $X \in \{\text{Off}, \text{On}\}$
denotes whether it is an offline or online setting and $Y
\in \{\text{IIF}, \text{TIF}\}$ (if present)
represents the kind of fairness property required (if any).}
\label{fig:squares}
\end{figure}
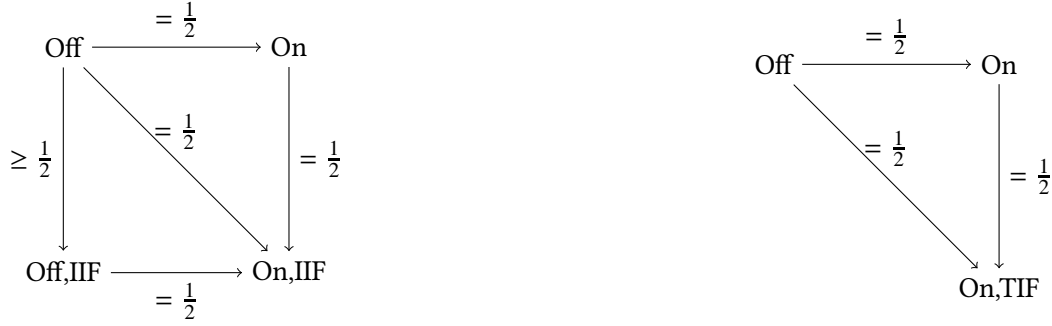

For the IIF property, there are four settings of interest
that correspond to all combinations of offline/online and
non-fair/fair settings according to the IIF property. We
pictorially represent each of those settings as vertices on
a square in the left side of \Cref{fig:squares}. Edges on
this diagram represent competitive ratios between the
corresponding settings (either already known or proven in
this paper).

The top edge of the IIF diagram is the well-known prophet
inequality we described in \Cref{sec:prelim} \maedit{which,
along with the tight example, gives a competitive ratio of
$1/2$ between the unconstrained (in terms of fairness)
offline setting and an unconstrained online setting}.

Next, we focus of the diagonal edge comparing the offline
non-fair setting (i.e.~a prophet) with an online and IIF
setting for which we manage to recover a $1/2$-competitive
ratio that is tight. This means that despite the fact that a
decision maker in the Online IIF setting is more restrained
as to what decisions they can make in order to remain
fair, they nevertheless can perform at least half as good as
an omniscient prophet who makes no effort in maintaining
fairness!

\begin{theorem}[{\maedit{IIF Diagonal Arc}}]
\label{thm:main1}
Let $\I = ((F_i)_{i = 1}^n, \pi)$ be an instance of the
hiring problem. Let $\OptOff$ be the optimal, offline
algorithm on $\I$ (i.e.~the prophet choosing $i^* \in
\argmax_{i} X_i$), and $\OptOnIIF$ be the optimal IIF
stopping rule on the same instance. Then,

\[ \expect[ \OptOnIIF ] \ge \frac12 \cdot \expect[ \OptOff
] \]

Moreover, the inequality is tight.
\end{theorem}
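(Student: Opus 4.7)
The plan is to exploit the LP characterizations from \Cref{sec:characterization} together with \Cref{lem:offline-structural}, by showing that any optimal solution to (\ref{lp:offline}) yields, after scaling by $1/2$, a feasible solution to (\ref{lp:online-iif}) with exactly half the objective value. Concretely, I would first invoke \Cref{lem:offline-structural} to obtain an optimal solution to (\ref{lp:offline}) of the symmetric form $p_{ix} = p(x)$, and denote the corresponding objective value $C^* = \sum_{i=1}^n \sum_{x \in \Support} x f_i(x) p(x) \ge \expect[\OptOff]$.

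Second, I would exhibit the candidate $\tilde{p}(x) = p(x)/2$ for (\ref{lp:online-iif}) and verify the constraint (\ref{eq:iif-ineq}). The key observation is that the offline constraint, specialized to this symmetric $p$, reads $\sum_{i=1}^n \sum_{y \in \Support} f_i(y) p(y) \le 1$, and dropping the term for $i = \pi(n)$ only strengthens this to $\sum_{k=1}^{n-1} \sum_{y \in \Support} f_{\pi(k)}(y) p(y) \le 1$. Combined with the trivial bound $p(x) \le 1$, dividing by $2$ and adding yields $\tilde{p}(x) + \sum_{k=1}^{n-1} \sum_{y \in \Support} f_{\pi(k)}(y) \tilde{p}(y) \le 1/2 + 1/2 = 1$, as required. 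Evaluating the linear objective of (\ref{lp:online-iif}) at $\tilde{p}$ produces $C^*/2$; since by \Cref{lem:iif-obj} and \Cref{lem:iif-structural} the optimum of this LP coincides with $\expect[\OptOnIIF]$, we conclude $\expect[\OptOnIIF] \ge C^*/2 \ge \expect[\OptOff]/2$.

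For tightness I would reuse the two-variable instance from the end of \Cref{sec:prelim}: $X_1 \equiv 1$ and $X_2 = 1/\eps$ with probability $\eps$, else $0$, arriving in the identity order. A direct inspection of (\ref{lp:online-iif}) shows that the constraint at $x = 1$ forces $2 p(1) \le 1$, so its objective $p(1) + p(1/\eps)$ is at most $1$, attained by $p(1) = p(1/\eps) = 1/2$. Since $\expect[\OptOff] = 2 - \eps$, the ratio tends to $1/2$ as $\eps \to 0$.

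The main subtlety is step two: the halving of $p$ is exactly what is needed because the two terms on the left of (\ref{eq:iif-ineq}) --- the ``self'' contribution $p(x)$ and the ``congestion'' $\sum_{k} \sum_{y} f_{\pi(k)}(y) p(y)$ --- must each fit within an allowance of $1/2$. Crucially, the pointwise bound on the congestion term requires the symmetric form from \Cref{lem:offline-structural}; without symmetrization the offline LP only bounds a mixed sum $\sum_i \sum_y p_{iy} f_i(y)$, which does not directly dominate the permutation-indexed sum that appears in (\ref{eq:iif-ineq}).
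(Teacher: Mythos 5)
Your proposal is correct and follows essentially the same route as the paper: symmetrize the offline LP solution via \Cref{lem:offline-structural}, halve it to get a feasible point of (\ref{lp:online-iif}) with objective $C^*/2$, and exhibit the classic two-point instance for tightness. The only (cosmetic) difference is that the paper uses a $\delta$-perturbation of the tight instance so that both distributions share a common support, which is needed for the IIF conditional probabilities $\Pr[\cdot \mid X_i = x]$ to be well-defined for every $x \in \Support$; your LP computation on the unperturbed instance is otherwise the same.
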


\begin{proof}
We start by considering an optimal solution $p_{ix}^*$ to
(\ref{lp:offline}) presented in \Cref{sec:offline} which
satisfies $p_{ix}^* = p_{jx}^*$ for all $i, j$ and $x$ and
denote that common value by $p_x^*$. The
existence of such a solution is guaranteed by
\Cref{lem:offline-structural}. Let
$C^*$ be the objective value of that solution. Based on
this, we define $p(x) = p_{x}^*/2$ for all $x \in
\Support$.

We claim that $p(x)$ is a feasible solution to
(\ref{lp:online-iif}) with objective value $C^*/2$.
Once we prove this, the theorem follows since $C^*$ is an
upper bound to $\expect[\OptOff]$.

Indeed, consider \maedit{the left-hand-side (LHS)
of a constraint of (\ref{lp:online-iif})
for some $x \in \Support$}:

\begin{align}
\label{eq:bound1}
p(x) + \sum_{k = 1}^{n-1} \sum_{y \in \Support}
f_{\pi(k)}(y) \cdot p(y)
&= \frac12 p_x^* + \frac12 \sum_{k = 1}^{n-1} \sum_{y \in
\Support} f_{\pi(k)}(y) \cdot p_y^*
\end{align}

Recall that $p_x^*$ as a solution to (\ref{lp:offline}) it
satisfies the feasibility constraint
$\sum_{i = 1}^n \sum_{x \in \Support} f_i(x) \cdot p_x^* \le
1$. Also, $p_x^* \in [0, 1]$, therefore, we can bound the
right-hand-side (RHS) of \Cref{eq:bound1} as follows:

\begin{align*}
\label{eq:bound1}
\frac12 p_x^* + \frac12 \sum_{k = 1}^{n-1} \sum_{y \in
\Support} f_{\pi(k)}(y) \cdot p_y^*
&\le \frac12 + \frac12 \sum_{k = 1}^n \sum_{x \in \Support}
f_{\pi(k)}(x) \cdot p_x^*\\
&= \frac12 + \frac12 \sum_{i = 1}^n \sum_{x \in \Support}
f_i(x) \cdot p_x^*
\le \frac12 + \frac12 = 1
\end{align*}

\maedit{As for the objective value of (\ref{lp:online-iif})
for $p(x)$}:


\maedit{
\[
\sum_{i = 1}^n \sum_{x \in \Support} x \cdot f_i(x) \cdot
p(x)
= \frac12 \cdot \sum_{i = 1}^n \sum_{x \in \Support} x \cdot
f_i(x) \cdot p_x^*
= \frac{C^*}{2}
\]
}

The tightness of the bound follows from the tighness of the
standard prophet inequality with the tight instance we
presented in \Cref{sec:prelim}. To be precise though, we
need to adapt that example to our framework which requires
that all distributions share the same support. We therefore
consider the following $\delta$-perturbation of the standard
tight instance where we add probability mass $\delta <
\eps^2$ to the support points missing on the distribution of
each random variable:

\begin{equation}
\label{eq:ex}
X_1 = \left\{
  \begin{array}{ll}
  1 / \eps, & \wp \delta\\
  1,& \wp 1 - 2\delta\\
  0,& \wp \delta
  \end{array}
  \right.
  ,\quad
X_2 = \left\{
  \begin{array}{ll}
  1/\eps,& \wp \eps\\
  1,& \wp \delta\\
  0,& \wp 1 - \delta - \eps
  \end{array}
  \right.
\end{equation}

In this instance, a prophet gets $\expect[\OptOff] = 2 -
\bigO(\eps)$. On the other hand, any stopping rule (IIF or
not) on this instance with arrival ordering $\pi_1 = (1, 2)$
gets no more than $\expect[\text{Online}^{\pi_1}] = 1 +
\bigO(\eps)$.
\end{proof}

Regarding the remaining arcs on the IIF side of
\Cref{fig:squares}, notice that the $\ge 1/2$ part of each
bound is a consequence of \Cref{thm:main1}. This is because
the optimal algorithm for any setting in $\{\text{Off,IIF},
\text{On}\}$ performs no better than the optimal algorithm
in the Off setting (in expectation) and no worse than the
optimal in the On,IIF setting (in expectation). Having shown
that Off and On,IIF have a gap of $1/2$, it directly follows
that the competitive ratio between any two other setting is
at least $1/2$.

The formal results appear in the following theorems.

\begin{theorem}[{\maedit{IIF Bottom Arc}}]
\label{thm:iif-bottom-arc}
Let $\I = ((F_i)_{i = 1}^n, \pi)$ be an instance of the
hiring problem. Let $\OptOffIIF$ be the optimal, offline
\emph{and IIF} algorithm on $\I$, and $\OptOnIIF$ be the optimal,
IIF stopping rule on the same instance. Then,

\[ \expect[ \OptOnIIF ] \ge \frac12 \cdot \expect[
\OptOffIIF ] \]

Moreover, the inequality is tight.
\end{theorem}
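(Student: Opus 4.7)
The plan is to derive the lower bound as an immediate corollary of \Cref{thm:main1} and then reuse the tight instance from its proof, producing an explicit offline IIF algorithm to certify the gap.

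For the lower bound, every offline IIF algorithm is in particular an offline algorithm, so $\expect[\OptOffIIF] \le \expect[\OptOff]$. Chaining with \Cref{thm:main1} yields
\[
\expect[\OptOnIIF] \ge \tfrac{1}{2}\, \expect[\OptOff] \ge \tfrac{1}{2}\, \expect[\OptOffIIF].
\]

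For tightness I would reuse the $\delta$-perturbed two-variable instance from \Cref{eq:ex} with arrival order $\pi = (1,2)$, taking $\delta \ll \eps^2$. The analysis in the proof of \Cref{thm:main1} already shows that every online stopping rule on this instance has expected value at most $1 + \bigO(\eps)$, so $\expect[\OptOnIIF] \le 1 + \bigO(\eps)$. The remaining, more delicate step is to exhibit an offline IIF algorithm on the same instance whose expected value approaches $2 - \eps$. Concretely, I would use the algorithm that prioritizes the atom $1/\eps$ and then falls back to the atom $1$, with tie-breaking weights tuned so the two candidates' conditional hiring probabilities agree at every value: when exactly one candidate has value $1/\eps$, hire candidate $2$ with probability $1$ if it is the one with value $1/\eps$, and hire candidate $1$ with probability $(1-\delta-\eps)/(1-\eps)$ if it is the one with value $1/\eps$; when both have value $1/\eps$, hire candidate $1$; when neither has value $1/\eps$, hire candidate $1$ on realization $(1,0)$ and hire candidate $2$ with probability $(1-\delta-\eps)/(1-2\delta)$ on realization $(1,1)$, hiring nobody otherwise. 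A routine case analysis verifies $\Pr[\text{hire } i \mid X_i = 1/\eps] = 1 - \delta$ and $\Pr[\text{hire } i \mid X_i = 1] = 1 - \delta - \eps$ for both $i \in \{1,2\}$, so the algorithm is IIF, and the objective expression from \Cref{lem:iif-obj} evaluates to $(1-\delta)\bigl(2 - \eps - \delta + \delta/\eps\bigr)$, which tends to $2 - \eps$ as $\delta \to 0$. Consequently $\expect[\OptOnIIF]/\expect[\OptOffIIF] \to 1/2$ as $\eps, \delta \to 0$.

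The main obstacle is designing the tie-breaking weights that make the algorithm IIF despite the asymmetric marginals $\F_1 \ne \F_2$: the natural prophet-like rule ``hire the $\argmax$'' is not IIF here, and repairing it requires solving a small linear system that simultaneously equates $\Pr[\text{hire } 1 \mid X_1 = x]$ and $\Pr[\text{hire } 2 \mid X_2 = x]$ at both $x = 1$ and $x = 1/\eps$. Once the weights above are in place, the feasibility checks (at most one hire per realization) and the expected-value calculation are mechanical.
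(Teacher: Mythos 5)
Your proposal is correct and follows the paper's proof structure exactly: the lower bound is obtained by chaining $\expect[\OptOffIIF] \le \expect[\OptOff]$ with \Cref{thm:main1}, and tightness is certified on the same $\delta$-perturbed instance (\ref{eq:ex}) by bounding the online side by $1+\bigO(\eps)$ and exhibiting an explicit offline IIF algorithm with value $2-\bigO(\eps)$. The only difference is the witness construction: the paper obtains its offline IIF algorithm by running an online IIF rule (accept any positive value with probability $1-\eps$, rescaled by the survival probability) on the \emph{reverse} arrival order $\pi_2=(2,1)$, whereas you build the offline rule directly via a case analysis whose conditional hiring probabilities $p(1/\eps)=1-\delta$, $p(1)=1-\delta-\eps$, $p(0)=0$ are identical across the two candidates and yield the same $2-\bigO(\eps)$ objective, so both witnesses are equally valid.
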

\begin{proof}
We already argued how the inequality follows from
\Cref{thm:main1}.
To show tightness, consider again the instance in (\ref{eq:ex}).
We already argued that for $\pi_1 = (1, 2)$, no stopping rule
can do better than $1 + \bigO(\eps)$, therefore the same holds
for the any Online, IIF stopping rule. We now prove that
there is an Offline, IIF rule with expected performance
at least $2 - \bigO(\eps)$. To do this, it is easier to
design an online, IIF stopping rule for the reverse
permutation $\pi_2 = (2, 1)$, which implies the existence of
an Offline, IIF algorithm with the same expected
performance. Indeed let $\ALG^{\pi_2}$ work as follows:
Inspect $X_2$ and if $X_2 > 0$, accept it with probability
$1 - \eps$ (by tossing a biased coin). Never accept the
zero value for $X_2$. If $X_2$ was not hired, proceed to
inspect $X_1$. If $X_1 < 0$ again do not hire. Otherwise,
for any value $X_1 > 0$, accept it with probability $r = \frac{1
- \eps}{1 - q}$ where $q = \Pr[\ALG \text{ hires $X_2$ at
time step $t = 1$}] = (1 - \eps)\cdot(\eps + \delta)$. If
$\delta$ is sufficiently small, e.g.~$\delta < \eps^2$ then
$q \le \eps$ and so $r$ is a well-defined probability.

To verify the IIF property, consider the probability
$p(i, x) = \Pr[\ALG^{\pi_2} \text{ accepts } X_i \mid X_i =
x]$. If $x = 0$ then $p(i, x) = 0$ regardless of $i$.
For $x > 0$, it is $p(2, x) = 1 - \eps$ by definition and
$p(1, x) = \frac{1 - \eps}{1 - q} \cdot (1 - q) = 1 - \eps$.
Therefore $p(i, x) = p(x)$ confirming the definition of IIF.

Finally, the expected value of this IIF stopping rule applied on
$\pi_2$ can be computed as follows:

\begin{align*}
\sum_{i=1}^2 \sum_{x \in \{0, 1,1/\eps\}}
    x \cdot p(x) \cdot f_i(x)
&= 
(1-\eps)\cdot \expect[X_1] + (1-\eps)\cdot \expect[X_2]\\
&=
(1-\eps)c \cdot [1 + \delta(1/\eps - 2)]
+
(1-\eps)\cdot [1 + \delta]\\
&> 2 - 2\eps
\end{align*}

\end{proof}

\begin{theorem}[{\maedit{IIF Right Arc}}]
\label{thm:iif-right-arc}
Let $\I = ((F_i)_{i = 1}^n, \pi)$ be an instance of the
hiring problem. Let $\OptOn$ be the optimal stopping rule on
$\I$, and $\OptOnIIF$ be the optimal, IIF stopping rule on
same instance. Then,

\[ \expect[ \OptOnIIF ] \ge \frac12 \cdot \expect[
\OptOn
] \]

Moreover, the inequality is tight.
\end{theorem}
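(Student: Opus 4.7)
The lower bound follows immediately from Theorem~\ref{thm:main1}, as the paragraph preceding the theorem already anticipates: any online stopping rule is, in particular, a (very restricted) offline algorithm, so $\expect[\OptOff] \ge \expect[\OptOn]$; combining this with $\expect[\OptOnIIF] \ge \frac12 \expect[\OptOff]$ from Theorem~\ref{thm:main1} yields the desired $\expect[\OptOnIIF] \ge \frac12 \expect[\OptOn]$. No separate argument is needed.

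For tightness one cannot simply recycle the instance~(\ref{eq:ex}) used in Theorem~\ref{thm:main1}, because in that two-candidate instance $\expect[\OptOn]$ is itself only $1 + \bigO(\eps)$, so there is no slack to exhibit a factor-$2$ gap against $\OptOnIIF$. I would instead design a \emph{multi-candidate} instance where the IIF constraint genuinely bites by forcing hiring probability to be spread over many low-value candidates. Take $n-1$ ``routine'' candidates deterministically worth $1$ and one ``star'' candidate $X_n$ equal to $1/\eps$ with probability $\eps$ and $0$ otherwise, under an arrival order $\pi$ that places the star first and the routine candidates in any order thereafter. The unconstrained optimum trivially accepts $X_n$ when it realises to $1/\eps$ and otherwise accepts the next routine candidate, for $\expect[\OptOn] = \eps \cdot (1/\eps) + (1-\eps) \cdot 1 = 2 - \eps$.

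To upper-bound $\expect[\OptOnIIF]$ I would instantiate the LP~(\ref{lp:online-iif}) on this instance. Writing $a = p(1)$ and $b = p(1/\eps)$, the inner double sum $\sum_{k=1}^{n-1}\sum_{y \in \Support} f_{\pi(k)}(y)\, p(y)$ collapses to $\eps\, b + (n-2)\, a$, so the two IIF constraints become
\[
(n-1)\, a + \eps\, b \le 1, \qquad (n-2)\, a + (1+\eps)\, b \le 1,
\]
and the objective is $(n-1)\, a + b$. Setting both constraints tight gives $a = b = 1/(n-1+\eps)$ and objective $n/(n-1+\eps)$, which a quick vertex check confirms is the LP optimum. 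Hence $\expect[\OptOnIIF] \le n/(n-1+\eps) \to 1$ as $n \to \infty$, so $\expect[\OptOnIIF]/\expect[\OptOn] \to 1/(2-\eps)$, which in turn tends to $1/2$ as $\eps \to 0$.

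The main obstacle is recognising that the tight instance must be multi-candidate: with two candidates, the IIF constraint imposes only a single non-trivial restriction, and the gap cannot exceed the gap already visible for the simple prophet inequality. The construction above circumvents this by deliberately bundling $n-1$ identical routine candidates, which the unconstrained rule can treat asymmetrically (accept at most one), but which IIF forces to be treated symmetrically, capping the contribution of the routine candidates at essentially $1$ and wasting the extra $1-\eps$ value the online rule earns from the star.
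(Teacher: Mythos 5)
Your lower-bound argument is exactly the paper's: it is the observation, made in the paragraph preceding the theorem, that $\expect[\OptOn] \le \expect[\OptOff]$ combined with \Cref{thm:main1}. Your tightness construction is different from the paper's but is correct. Instantiating (\ref{lp:online-iif}) on your instance (with $p(0)=0$, which is without loss for the maximization) does give the constraints $(n-1)a+\eps b\le 1$ and $(n-2)a+(1+\eps)b\le 1$ with objective $(n-1)a+b$, whose optimum $n/(n-1+\eps)$ tends to $1$ while $\expect[\OptOn]=2-\eps$, so the ratio tends to $1/2$. What your route costs is a doubly-parameterized family (large $n$, small $\eps$) and a two-constraint vertex computation; the paper gets the same conclusion from a two-candidate instance with a one-line LP.

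The one genuine error is your claim that a tight instance \emph{must} be multi-candidate because ``with two candidates, the IIF constraint imposes only a single non-trivial restriction, and the gap cannot exceed the gap already visible for the simple prophet inequality.'' This is false, and the paper's own tight example refutes it: take $X_1=1$ deterministically and $X_2$ equal to $1$ with probability $\eps$ and $0$ otherwise, with $X_1$ arriving first. Here $\expect[\OptOn]=1$ (and indeed $\expect[\OptOff]=1$, so there is \emph{no} prophet-inequality gap at all), yet the constraint of (\ref{lp:online-iif}) at $x=1$ reads $p(1)+f_1(1)p(1)=2p(1)\le 1$, because candidate $1$'s own acceptance probability appears both as the conditional hiring term and inside the cumulative ``already stopped'' sum. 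This forces $p(1)\le\frac12$ and hence $\expect[\OptOnIIF]\to\frac12$. The mechanism that makes IIF bite is not spreading probability over many identical candidates; it is that a later candidate's conditional hiring probability at value $x$ must equal that of an earlier candidate who realizes $x$ with high probability, and the earlier candidate consumes reachability. Your construction realizes the same tension in a more elaborate way, so the proof stands, but the stated ``main obstacle'' is not an obstacle.
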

\begin{proof}
Again, we focus only on the upper bound and provide a tight
instance. Consider the following instance\footnote{For the
IIF property to be well-defined, technically we need a
common support across r.v.s. This can be amended by
considering a $\delta$-perturbation of the example just like
we did with previous tight examples.}:

\[
X_1 = 1,\quad X_2 =
\begin{cases}
  1, & \wp \eps\\
  0, & \wp 1 - \eps
\end{cases}
\]

with arrival ordering $\pi_1 = (1, 2)$. Clearly,
$\expect[\OptOn] = 1$ by always hiring the first candidate.
To compute the optimal, we solve (\ref{lp:online-iif}),
which in this case reduces to

\[
\begin{array}{ll}
\max & p_1 + \eps \cdot p_1\\
\st & p_0 + p_1 \le 1\\
    & p_1 + p_1 \le 1\\
    & p_0, p_1 \in [0, 1]
\end{array}
\]


\noindent
whose optimal solution as $\eps \to 0$ approaches $1/2$,
meaning that $\expect[\OptOffIIF] \to 1/2$.
\end{proof}

Similarly, \Cref{thm:main1} implies a lower bound on the
competitive ratio for the arc on the left, comparing
fairness exclusively on an offline setting. For this, we do
not have a tight example and it is an interesting open
question whether this gap can be improved.

\begin{theorem}[{\maedit{IIF Left Arc}}]
\label{thm:iif-left-arc}
Let $\I = ((F_i)_{i = 1}^n, \pi)$ be an instance of the
hiring problem. Let $\OptOff$ be the optimal offline rule on
$\I$ (i.e.~a prophet), and $\OptOffIIF$ be the optimal, IIF
stopping rule on the same instance. Then,

\[ \expect[ \OptOffIIF ] \ge \frac12 \cdot \expect[
\OptOff ] \]
\end{theorem}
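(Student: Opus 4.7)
The plan is to derive the bound as an immediate corollary of \Cref{thm:main1} via a simple sandwich argument, exactly as previewed in the discussion preceding the theorem statement. The key observation is that the class of offline IIF algorithms contains the class of online IIF stopping rules: any stopping rule that inspects candidates one-by-one according to $\pi$ and makes irrevocable hiring decisions can also be run by an offline decision maker who simply ignores the extra information about future values. Crucially, the IIF property of an algorithm depends only on the conditional hiring probabilities $\Pr[\ALG\text{ hires }i\mid X_i=x]$, which do not change when we reinterpret an online rule as an offline one. Hence every online IIF rule on $\I$ yields a feasible offline IIF algorithm on $\I$ with the same expected performance, so that
\[
\expect[\OptOffIIF]\ \ge\ \expect[\OptOnIIF].
\]

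Combining this with \Cref{thm:main1}, which guarantees $\expect[\OptOnIIF]\ge \tfrac12\cdot\expect[\OptOff]$, we obtain
\[
\expect[\OptOffIIF]\ \ge\ \expect[\OptOnIIF]\ \ge\ \tfrac12\cdot \expect[\OptOff],
\]
which is the desired inequality.

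There is essentially no obstacle: the work of establishing the factor $1/2$ has already been done in the diagonal arc (\Cref{thm:main1}) via the LP relaxation of \Cref{sec:offline} and the structural lemma \Cref{lem:offline-structural}. The only subtlety worth flagging in the written proof is that the inclusion of online IIF rules into offline IIF algorithms genuinely requires checking that the IIF property, as stated in \Cref{defn:iif}, is preserved under this reinterpretation; this is immediate since the definition conditions only on $X_i=x$ and makes no reference to the order of inspection. The proof is therefore a one-line chain of inequalities, and no tight example is claimed (matching the authors' remark that closing the gap on this arc is left as an open question).
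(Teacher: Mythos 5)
Your proof is correct and matches the paper's own argument: the authors likewise obtain the IIF Left Arc as an immediate consequence of \Cref{thm:main1}, noting that the optimal Off,IIF algorithm performs no worse than the optimal On,IIF stopping rule (since every online IIF rule can be run offline with the same conditional hiring probabilities) and no better than the unconstrained prophet. Your flagging of the preservation of the IIF property under the online-to-offline reinterpretation is a nice touch, but no further work is needed.
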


\subsection{Competitive TIF Algorithms}

We now move our attention to competitive ratios between
settings where the fairness criterion is the TIF property.
Here the picture is simpler since the TIF property only
makes sense in the online setting. The three interesting
settings we will be working with in this section are shown
on the right side of \Cref{fig:squares}. The top arc
again corresponds to the known, standard prophet inequality.
Here we prove the bounds for the other two arc using the
tools we developed in the previous sections.

We begin with the diagonal arc where again we manage to
prove a $1/2$-competitive ratio. In fact, the theorem that
follows is stronger as it implies the existence of a family
of algorithms that is at the same time TIF and each of the
algorithms individually is IIF achieving the desired ratio
of $1/2$. 

\begin{theorem}[{\maedit{TIF Diagonal Arc}}]
\label{thm:main2}
Let $\I = (F_i)_{i = 1}^n$ be an instance of the hiring
problem. Let $\OptOff$ be the optimal, offline algorithm
(i.e.~the prophet choosing $i^* \in \argmax_{i} X_i$).
There exists a family of TIF stopping rules
$\{\OnTIF^{\pi}\}_{\pi \in S_n}$ for instance $\I$ such
that,

\[ \expect[ \OnTIF^{\pi} ] \ge \frac12 \cdot \expect[ \OptOff
],\quad \forall \pi \in S_n \]

Moreover, each of $\OnTIF^{\pi}$ {\bf also satisfies} the
IIF property.
Finally, the inequality is tight.
\end{theorem}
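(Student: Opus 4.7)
The plan is to mimic the proof of \Cref{thm:main1}, but now using the TIF LP and the TIF structural lemma. First I would invoke \Cref{lem:offline-structural} to obtain an optimal solution $p^*$ to (\ref{lp:offline}) of the special form $p^*_{ix} = p^*_x$ (i.e.~depending only on $x$), with objective value $C^* \ge \expect[\OptOff]$. I would then define a candidate TIF probability function by setting $p(i,x) = p^*_x/2$ for every $i \in [n]$ and every $x \in \Support$, and show it is feasible for (\ref{lp:online-tif}).

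The feasibility check is the routine computation: for any fixed $i$ and $x$,
\begin{align*}
p(i,x) + \sum_{k \ne i} \sum_{y \in \Support} p(k,y) f_k(y)
&= \tfrac12 p^*_x + \tfrac12 \sum_{k \ne i} \sum_{y \in \Support} p^*_y f_k(y)\\
&\le \tfrac12 + \tfrac12 \sum_{k=1}^n \sum_{y \in \Support} p^*_y f_k(y) \le \tfrac12 + \tfrac12 = 1,
\end{align*}
where the first inequality uses $p^*_x \le 1$ and drops the restriction $k \ne i$, and the second uses the single constraint of (\ref{lp:offline}). By \Cref{lem:tif-structural}, instantiating \Cref{alg:tif-lp} with this $p$ for every permutation $\pi$ yields a family $\{\OnTIF^\pi\}_{\pi \in S_n}$ satisfying TIF with probability function $p(i,x)$. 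Its expected value, which is the LP objective, is $\sum_i \sum_x x f_i(x) p(i,x) = \tfrac12 \sum_i \sum_x x f_i(x) p^*_x = C^*/2 \ge \tfrac12 \expect[\OptOff]$.

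The extra IIF claim is essentially for free: since $p(i,x) = p^*_x/2$ is a function of $x$ alone, for each fixed $\pi$ the stopping rule $\OnTIF^\pi$ satisfies $\Pr[\OnTIF^\pi \text{ hires } i \mid X_i = x] = p^*_x/2$, which is the IIF definition with probability function $p(x) = p^*_x/2$. (Equivalently, one checks that the same $p$ also satisfies \Cref{eq:iif-ineq} for every $\pi$, which follows from the TIF feasibility above by dropping the term at index $k=i$.)

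For tightness, I would reuse the $\delta$-perturbed two-variable instance from \Cref{eq:ex}, where $\expect[\OptOff] = 2 - \bigO(\eps)$. For tightness we need a TIF upper bound of $1 + \bigO(\eps)$ against the prophet; the natural approach is to observe that in the forward arrival ordering $\pi_1 = (1,2)$ any stopping rule (fair or not) achieves at most $1 + \bigO(\eps)$, and this transfers to the TIF family because, by definition, every member $\OnTIF^\pi$ is itself a stopping rule subject to the same bound on ordering $\pi_1$. The main (mild) obstacle is only the bookkeeping needed to ensure a common support so the TIF definition is well-defined, which is handled exactly as in the proof of \Cref{thm:main1} by the $\delta$-perturbation.
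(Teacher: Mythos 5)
Your proposal is correct and follows essentially the same route as the paper: halve the symmetric optimal solution of (\ref{lp:offline}) guaranteed by \Cref{lem:offline-structural}, verify feasibility for (\ref{lp:online-tif}), invoke \Cref{lem:tif-structural}, and reuse the $\delta$-perturbed instance (\ref{eq:ex}) with ordering $\pi_1=(1,2)$ for tightness. The only (cosmetic) difference is that you deduce IIF directly from the fact that $p(i,x)=p^*_x/2$ is independent of $i$, whereas the paper re-verifies the constraints of (\ref{lp:online-iif}); both are valid.
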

\begin{proof}
Consider again an optimal solution $p_{ix}^*$ to
(\ref{lp:offline}) presented in \Cref{sec:offline}
\maedit{and denote the optimal objective value by $C^*$}.
Define $p(i, x) = p_{ix}^*/2$. Recall that $p_{ix}^*$ is
such that $\sum_{i = 1}^n \sum_{x \in \Support} p_{ix}^*
\cdot f_i(x) \le 1$.

To show that $p(i, x)$ as defined satisfies the constraints
of (\ref{lp:online-tif}),

\begin{align*}
p(i, x) + \sum_{k \ne i} \sum_{y \in \Support} f_k(y) \cdot
p(k, y)
&= \frac12 p_{ix}^* + \frac12 \sum_{k \ne i} \sum_{y \in
\Support} f_k(y) \cdot p_{ky}^*
\le \frac12 + \frac12 \sum_{k = 1}^n \sum_{y \in \Support}
f_k(y) \cdot p_{ky}^* \le 1
\end{align*}

%

\maedit{
Now we compute the
objective value of (\ref{lp:online-tif}),
}

\maedit{
\[
\sum_{i = 1}^n \sum_{x \in \Support} x \cdot f_i(x) \cdot
p(i, x)
= \frac12 \sum_{i = 1}^n \sum_{x \in \Support} x \cdot
f_i(x) \cdot p_{ix}^*
= \frac{C^*}{2}
\]
}

In fact, if we assume as before the special solution
$p_{ix}^* = p_x^*$ guaranteed by
\Cref{lem:offline-structural}, we can further prove that the
constraints of (\ref{lp:online-iif}) are satisfied for any
fixed arrival ordering $\pi \in S_n$, meaning that the
algorithm we designed in this proof will also be an IIF
stopping rule: $ p(x) + \sum_{k = 1}^{n-1} \sum_{y \in
\Support} f_{\pi(k)}(y) \cdot p(y) \le \frac12 p_x^* +
\frac12 \sum_{i = 1}^n \sum_{y \in \Support} f_i(y) \cdot
p_y^* \le 1$.

As for tightness of the inequality, the same
$\delta$-perturbation of the standard prophet inequality
instance presented in \Cref{thm:main1} suffices since for
permutation $\pi_1 = (1, 2)$, no algorithm (TIF or not) can
do better than $1 + \bigO(\eps)$ whereas a prophet gets at
least $2 + \bigO(\eps)$.
\end{proof}

\begin{theorem}[{\maedit{TIF Right Arc}}]
\label{thm:tif-right-arc}
Let $\I = (F_i)_{i = 1}^n$ be an instance of the
hiring problem. Let $\{\OptOn^{\pi}\}_{\pi \in S_n}$ be a
family of the optimal stopping rules for $\I$ (each being
optimal for the particular arrival ordering $\pi$).
There exists a family of TIF stopping rules
$\{\OnTIF^{\pi}\}_{\pi \in S_n}$ for the instance $\I$ such
that,

\[ \expect[ \OnTIF^{\pi} ] \ge \frac12 \cdot \expect[
\OptOn^{\pi} ],\quad \forall \pi \in S_n \]

Moreover, the inequality is tight.
\end{theorem}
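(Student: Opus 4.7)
The $\ge 1/2$ direction requires almost no work: by \Cref{thm:main2} we already have a family $\{\OnTIF^\pi\}$ with $\expect[\OnTIF^\pi]\ge \tfrac12 \expect[\OptOff]$ for every $\pi$, and trivially $\expect[\OptOff]\ge \expect[\OptOn^\pi]$ because the offline prophet sees all realizations and can simulate any online rule. Chaining these two inequalities gives the claim. So the substantive part of the proof is exhibiting an instance and a specific permutation that make the ratio tight.

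For tightness, my plan is to exploit an asymmetry that the other arcs did not use: the (\ref{lp:online-tif}) constraints and objective are manifestly \emph{independent of $\pi$}, so the optimal TIF value is a single number, whereas $\expect[\OptOn^\pi]$ can depend strongly on the order. Taking the classical prophet-inequality tight instance $X_1 = 1$ (deterministic) and $X_2 = 1/\eps$ w.p.\ $\eps$, else $0$ (with the $\delta$-perturbation used in \Cref{thm:main1} so the supports agree) and reading it \emph{in reverse} $\pi_2 = (2,1)$, the optimal online stopping rule inspects $X_2$ first, keeps it whenever $X_2 = 1/\eps$, and otherwise falls back on the deterministic $X_1$; this yields $\expect[\OptOn^{\pi_2}] = \eps \cdot \tfrac{1}{\eps} + (1-\eps)\cdot 1 = 2-\bigO(\eps)$.

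The second step is to bound the value of (\ref{lp:online-tif}) on this instance by $1+\bigO(\eps)$. The relevant constraint is the one at $i=2, x=1/\eps$: since $f_1(1) = 1-\bigO(\delta)$, it reduces essentially to $p(2,1/\eps) + p(1,1) \le 1$. The objective contributions from these two variables are $1\cdot p(1,1)\cdot f_1(1) + (1/\eps)\cdot p(2,1/\eps)\cdot \eps = p(1,1) + p(2,1/\eps) + \bigO(\delta)$, and the remaining terms ($p(2,0)$, the perturbation mass, etc.) contribute only $\bigO(\eps+\delta)$. So the TIF LP value is at most $1 + \bigO(\eps)$, and we conclude $\expect[\OnTIF^{\pi_2}]/\expect[\OptOn^{\pi_2}]\le (1+\bigO(\eps))/(2-\bigO(\eps)) \to \tfrac12$.

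The only place that needs care is the perturbation bookkeeping: one must confirm that adding $\delta$-mass on missing support points (as in \Cref{eq:ex}) changes both quantities only by lower-order terms, which is routine. I don't anticipate a real obstacle beyond this — the conceptual content is simply that TIF must fix a probability schedule $p(i,x)$ that is simultaneously consistent across all arrival orders, so it cannot tailor itself to the order $\pi_2$ in which the prophet instance becomes ``easy'' online.
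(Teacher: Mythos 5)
Your proof is correct and follows essentially the same route as the paper's: the $\ge \frac12$ direction by chaining \Cref{thm:main2} with $\expect[\OptOff] \ge \expect[\OptOn^{\pi}]$, and tightness via the $\delta$-perturbed two-point instance of \Cref{eq:ex} read in the reverse order $\pi_2 = (2,1)$, where the unconstrained online rule gets $2 - \bigO(\eps)$ but the $\pi$-independent (\ref{lp:online-tif}) value is only $1 + \bigO(\eps)$. The only difference is that you actually carry out the LP bound (via the constraint at $i = 2$, $x = 1/\eps$), a computation the paper explicitly omits, and your bookkeeping there checks out.
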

\begin{proof}
Observe that the $\ge 1/2$ part is a consequence of
\Cref{thm:main2}. This is because no online algorithm can
perform better that an offline one in expectation, therefore
if an Online,TIF algorithm is $1/2$-competitive compared to
an Offline, the bound carries over when comparing to
Online algorithms which perform at most as good.

For the tightness, we consider again instance (\ref{eq:ex}).
We claim that no TIF stopping rule can perform better than
$1 - \bigO(\eps)$ and also there exists an online algorithm for
the same instance with arrival ordering $\pi_2 = (2, 1)$
which achieves performance $2 + \bigO(\eps)$. The later
claim is easy to see since without the
$\delta$-perturbation, inspecting $X_2$ first gives
advantage to the stopping rule to and performs as well as an
offline prophet. For the former
claim, one needs to solve the LP in (\ref{lp:online-tif})
for this specific instance to verify that the optimal
solution has objective value $1 - \bigO(\eps)$. We omit the
computation here.
\end{proof}

\section{An impossibility result}
\label{sec:impossibility}

We've seen how fairness (IIF, TIF or both) is achievable
in the online setting while guaranteeing a performance at
least half that of a prophet. This means that requiring
fairness does not hurt the expected performance of an online
decision maker (in the worst case over instances) more than
the loss induced purely by the online nature of the setting
when compared to an offline one. However, fairness comes at
a cost which is not immediately visible when comparing
expected performance.

Consider the online IIF
$1/2$-competitive algorithm
derived in \Cref{thm:main1}. By definition, $p(x) =
\frac12 p_x^*$ so the expected number of candidates hired by
the algorithm is $\sum_{i = 1}^n \sum_{x \in \Support} p(x)
\cdot f_i(x) = \frac12 \sum_{i = 1}^n \sum_{s \in \Support}
p_x^* \cdot f_i(x) \le \frac12$ using the constraint of
(\ref{lp:offline}). Since the stopping rules always hires at most
one person, this means the probability of hiring some
candidate is at most $1/2$. So there is at least $1/2$
probability of not hiring anyone!

Sometimes, not hiring anyone could be undesirable even when
on average a stopping rule is performing well.  It is
natural to ask, is there a fair (IIF/TIF) stopping rule with
non-zero competitive ratio when compared to a prophet which
hires some candidate with probability $1$?
A property like that, which requires that a stopping rule
always hires some candidate has also been studied in the
context of secretary problem by \cite{Buchbinder} where it
is called ``must-hire'' property.

In this section, we give a negative answer to this question
for both IIF and TIF stopping rules. However, our work
leaves open the question of whether there are \emph{offline}
algorithms in the  Offline, IIF or the Offline, TIF  setting
which satisfy the must-hire constraint.

\begin{theorem}
For any $\eps > 0$, there exists an instance $\I_{\eps} =
((\F_i)_{i = 1}^n, \pi)$ of the hiring problem
such that for any IIF stopping rule $\ALG$ for $\I_{\eps}$
with $\Pr[\ALG \text{ hires exactly one candidate}] = 1$, we have:

\[ \expect[\ALG] < \eps \cdot \expect[ \max_i X_i ] \]
\end{theorem}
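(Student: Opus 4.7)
The plan is to exhibit a deterministic two-value instance in which the heavy candidate arrives first, so that IIF together with must-hire forces most of the hiring ``budget'' to be wasted on low-value candidates. Concretely, I would fix $n > 2/\eps$, pick $M$ large and $\delta > 0$ with $\delta < \eps M / 2$, and set $X_1 \equiv M$ and $X_i \equiv \delta$ for $2 \le i \le n$, with the forward arrival order $\pi = (1, 2, \ldots, n)$. Then $\expect[\max_i X_i] = M$, so all I need is to bound every must-hire IIF rule on this instance by $M/n + \delta$.

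First I would translate the IIF constraints on the rule into linear conditions on the probability function $p : \{\delta, M\} \to [0,1]$. By \Cref{lem:iif-structural}, $p$ must satisfy (\ref{eq:iif-ineq}); with $M$ arriving first, the sum $\sum_{k=1}^{n-1}\sum_y f_{\pi(k)}(y)\, p(y)$ equals $p(M) + (n-2)\, p(\delta)$, so the constraint at $x = M$ becomes $2\, p(M) + (n-2)\, p(\delta) \le 1$ and the constraint at $x = \delta$ becomes $p(M) + (n-1)\, p(\delta) \le 1$. Because a stopping rule hires at most one candidate, the must-hire hypothesis is equivalent to $\sum_i \Pr[\text{hire } i] = 1$, which under IIF reads $p(M) + (n-1)\, p(\delta) = 1$; in particular the second structural inequality is saturated.

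The algebra is then short. Using the must-hire equation, the first structural inequality rewrites as $p(M) + 1 - p(\delta) \le 1$, i.e., $p(M) \le p(\delta)$. Combined with $p(M) + (n-1)\, p(\delta) = 1$ this forces $p(\delta) \ge 1/n$ and hence $p(M) \le 1/n$. \Cref{lem:iif-obj} then yields
\[ \expect[\ALG] = p(M)\, M + (n-1)\, p(\delta)\, \delta \le \frac{M}{n} + \delta < \eps M, \]
by the choice of $n$ and $\delta$, and the right-hand side is exactly $\eps \cdot \expect[\max_i X_i]$.

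The delicate step I expect to spend the most time verifying is the precise translation of the must-hire hypothesis into the equality $p(M) + (n-1)\, p(\delta) = 1$: this relies on the fact that, for a stopping rule, ``hires exactly one'' and ``hires at least one'' coincide, so the total expected number of hires equals the hiring probability and is given by $\sum_i \sum_x p(x)\, f_i(x)$ under IIF. Once that is pinned down, the instance is extremal essentially because placing the high value first forces the IIF inequality at $x = M$ to ``charge'' acceptance of $M$ twice, and the must-hire equation then propagates this cost back into $p(M)$, squeezing it to $1/n$.
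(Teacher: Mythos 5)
Your overall strategy is the same as the paper's---translate must-hire plus IIF into the LP of \Cref{lem:iif-structural}, observe that the must-hire equality collapses the feasible region, and then exhibit an instance where the resulting objective is far below $\expect[\max_i X_i]$---but your specific instance has a genuine flaw. In your instance the distributions do not share a common support: only candidate $1$ ever takes the value $M$, and only candidates $2,\dots,n$ ever take the value $\delta$. The paper itself flags (in the footnote to \Cref{thm:iif-right-arc}) that the IIF property is only well-defined under a common support, and the ``only if'' direction of \Cref{lem:iif-structural} silently relies on it: the constraint (\ref{eq:iif-ineq}) at a point $x$ comes from requiring $p(x)\le Q_t$ for every arrival time $t$ at which some candidate can \emph{actually realize} the value $x$, and one takes $t=n$ only because in the common-support case every value is realizable by the last arriver. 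In your instance the only candidate who can take the value $M$ arrives at $t=1$, so the only constraint genuinely imposed on real stopping rules is $p(M)\le Q_1=1$; your key inequality $2p(M)+(n-2)p(\delta)\le 1$ is an artifact of applying the lemma outside its domain of validity. Concretely, the rule ``always hire candidate $1$'' is a must-hire stopping rule satisfying IIF in the only meaningful sense (its conditional hiring probabilities are $p(M)=1$ and $p(\delta)=0$ wherever they are defined), and it achieves $\expect[\ALG]=M=\expect[\max_i X_i]$, contradicting your claimed bound.

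The gap is repairable: either perturb your instance so that every candidate places mass $\gamma>0$ on both support points (the algebra then yields $p(M)\le p(\delta)$ and $p(M)\le 1/n$ exactly as you computed, up to $O(\gamma)$ corrections), or do what the paper does and take an i.i.d.\ instance, where common support is automatic. The paper's route is also slightly more general: it first shows that for \emph{any} common-support instance, must-hire plus IIF forces $p(x)\equiv 1/n$ and hence $\expect[\ALG]=\frac1n\sum_i\expect[X_i]$, and only then chooses i.i.d.\ two-point distributions making $\frac1n\sum_i\expect[X_i]$ negligible compared to $\expect[\max_i X_i]$. You correctly identified the must-hire translation as a point needing care, but the real delicate step---whether the structural constraint at $x=M$ is actually forced on stopping rules for your instance---is the one that fails.
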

\begin{proof}
The constraint that a stopping rule always hires can be
expressed as a linear constraint in terms of $p(x)$
by letting

\[ q_i = \Pr[\ALG \text{ hires } i] = \sum_{x
\in \Support} \Pr[\ALG \text{ hires } i \mid X_i = x
]\cdot \Pr[X_i = x] = \sum_{x \in \Support} p(x) \cdot
f_i(x)
\]

Then we require that,

\[
\expect[\text{\# of
candidates hired}] = \sum_{i = 1}^n q_i = \sum_{i = 1}^n
\sum_{x \in \Support} p(x) \cdot f_i(x) = 1
\]

Append this constraint to (\ref{lp:online-iif}) to get an augmented LP.
Any solution to this augmented LP can be turned back into an
stopping rule which hires at most one person as we argued in
the previous sections and the extra constraint ensures that
the stopping rule derived hires exactly one person.
\maedit{Denoting by $\OptOnMHIIF$ the value picked by the optimal,
must-hire, IIF stopping rule, we thus have that the optimal
objective value of the augmented LP is exactly
$\expect[\OptOnMHIIF]$.}

Re-write the first constraint by using the second constraint
as follows:
\begin{align*}
\forall x \in \Support:
p(x) + \sum_{k = 1}^{n-1}
\sum_{y \in S} f_{\pi(k)}(y) \cdot p(y)\le 1
&\Leftrightarrow
p(x) + \underbrace{
\sum_{i = 1}^n \sum_{y \in \Support} f_{i}(y)
\cdot p(y)
}_{=1}
- \sum_{y \in \Support} f_{\pi(n)}(y) \cdot p(y)
\le 1\\
&\Leftrightarrow
p(x) \le \sum_{y \in \Support} f_{\pi(n)}(y) \cdot
p(y)
\end{align*}

The last inequality says that all values $p(x)$ (which are
real numbers in $[0, 1]$) should be bounded above by a
convex combinations of the set of all $\{p(x)\}_{x \in
\Support}$. The only way for this to happen is if $p(x) =
p(y) = p$ for all $x, y \in \Support$.

Using now the extra constraint, we can compute $p$ as
follows:

\[
\sum_{i = 1}^n \sum_{x \in \Support} p \cdot f_i(x) = 1
\Rightarrow p \cdot \sum_{i = 1}^n 1 = 1 \Rightarrow p =
1/n
\]

Substituting back in the objective we get,

\[ \expect[\OptOnMHIIF] = \sum_{i =
1}^n \sum_{x \in \Support} x \cdot f_i(x) \cdot p =
\frac{1}{n} \sum_{i = 1}^n \expect[X_i]
\]

Therefore, we can take any instance such that $\sum_{i =
1}^n \expect[X_i]$ is very close to $\expect[\max_{i = 1}^n
X_i]$ and make $n$ large enough.

More precisely, given $\eps > 0$ define $\I_{\eps}$
to contain $n = \left\lceil \frac{2\ln 2}{\ln\left(
\frac{1}{1 - \eps/2} \right)} \right\rceil$
i.i.d.~random variables distributed as follows:
\[
\forall i \in [n]:
X_i = \begin{cases}
2/\eps, &\wp \eps/2\\
0,& \wp 1 - \eps/2
\end{cases}
\]

Then $\expect[\OptOnMHIIF] = 1$ and it can be shown that
$\expect[\max_i X_i] > 1/\eps$.
\end{proof}

\begin{theorem}
For any $\eps > 0$, there exists an instance $\I_{\eps} =
(\F_i)_{i = 1}^n$ of the hiring problem such that for any
TIF family of stopping rules $\{\ALG^{\pi}\}_{\pi \in S_n}$
with $\Pr[\ALG^{\pi} \text{ hires exactly one candidate}] =
1$ for all $\pi$, we have:

\[ \expect[\ALG^{\pi} ] \le \eps \cdot \expect[ \max_i X_i
], \forall \pi \in S_n \]
\end{theorem}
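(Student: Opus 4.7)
The plan is to parallel the preceding IIF impossibility result, using the TIF structural lemma (\Cref{lem:tif-structural}) in place of its IIF counterpart. First I would encode the must-hire requirement as a linear equality in the probability function $p(i, x)$: since $\Pr[\ALG^{\pi} \text{ hires } i] = \sum_{x} p(i,x) f_i(x)$ is automatically independent of $\pi$ for a TIF family, the requirement becomes $\sum_{i=1}^n \sum_{x \in \Support} p(i,x) f_i(x) = 1$. Appending this equality to (\ref{lp:online-tif}) yields an augmented LP whose optimum upper-bounds the expected value of any must-hire TIF family of stopping rules.

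The key step is to substitute the must-hire equality back into each TIF constraint. Using $\sum_{k \ne i}\sum_y p(k,y) f_k(y) = 1 - \sum_y p(i,y) f_i(y)$, the TIF inequality $p(i,x) + \sum_{k \ne i}\sum_y p(k,y) f_k(y) \le 1$ collapses to
\[ p(i, x) \le \sum_{y \in \Support} p(i, y) f_i(y), \quad \forall i \in [n],\ \forall x \in \Support. \]
For each fixed $i$, the right-hand side is a convex combination of $\{p(i,y)\}_{y \in \supp(\F_i)}$, hence bounded by $\max_{y \in \supp(\F_i)} p(i,y)$. Specialising $x$ to achieve this maximum forces equality throughout the convex combination, so $p(i, y)$ must be constant --- call it $p_i$ --- across $y \in \supp(\F_i)$; the values of $p(i,y)$ outside $\supp(\F_i)$ play no role and can be set to $p_i$ as well.

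With $p(i, y) \equiv p_i$ on $\supp(\F_i)$, the must-hire constraint simplifies to $\sum_{i=1}^n p_i = 1$, and together with $p_i \ge 0$ this forces $p_i \le 1$ automatically. The objective reduces to the linear form $\sum_{i=1}^n p_i \expect[X_i]$, which over the simplex is maximised by concentrating all mass on a single candidate of largest mean, giving an optimum of $\max_i \expect[X_i]$. Thus the optimal must-hire TIF family achieves expected value exactly $\max_i \expect[X_i]$, regardless of $\pi$.

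For the instance, I would reuse the same i.i.d.~construction from the preceding theorem: take $n = \lceil 2 \ln 2 / \ln(1/(1-\eps/2)) \rceil$ i.i.d.~copies of $X$ with $X = 2/\eps$ w.p.~$\eps/2$ and $X = 0$ otherwise. Then $\max_i \expect[X_i] = 1$ while $\expect[\max_i X_i] > 1/\eps$, so $\expect[\ALG^\pi] \le 1 < \eps \cdot \expect[\max_i X_i]$ for every permutation $\pi$. The main subtlety lies in the convex-combination argument: unlike in the IIF case, where a single function $p(x)$ had to collapse to a constant, here the collapse must be carried out separately for each $i$, and one must verify that values $p(i,y)$ outside $\supp(\F_i)$ neither contribute to the objective nor endanger any constraint, so they may be chosen freely without affecting the LP bound.
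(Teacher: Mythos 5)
Your proposal is correct and follows exactly the route the paper intends: the paper omits this proof, stating only that it ``uses similar tools'' as the IIF impossibility result, and you have supplied precisely those details (encode must-hire as an equality, substitute it into the structural-lemma constraints, collapse $p$ to a constant, and reuse the i.i.d.\ instance). You also correctly identify the one genuine divergence from the IIF case --- the convex-combination collapse happens separately for each $i$, so the augmented LP's optimum is $\max_i \expect[X_i]$ rather than $\frac{1}{n}\sum_i \expect[X_i]$ --- and observe that for the i.i.d.\ instance these coincide, so the same construction closes the argument.
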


The proof uses similar tools as the proof of the previous
theorem and is omitted.

\section{Single-Sample and Double-Sample Fair Prophet Inequalities}
\label{sec:double-sample}

Suppose our algorithm doesn't know the distributions
$\F_1,\ldots,\F_n$ but is 
given independent samples from them. 
In this section we 
present a $\frac12$-competitive 
offline selection rule that 
satisfies IIF, given one 
sample from each distribution.
Then we present a
$\frac19$-competitive family of stopping rules
satisfying both IIF and TIF,
given {\em two} independent 
samples from each distribution.

In the following algorithms, the 
elements to be selected are denoted
$X_1,\ldots,X_n$; additional 
independent samples
from the same distributions 
are denoted as 
$Y_1,\ldots,Y_n$ 
and $Z_1,\ldots,Z_n$.
The algorithms that we 
analyze are comparison-based,
i.e.~their decisions are based
on comparing pairs of 
elements of the multiset 
$W=\{X_1,Y_1,Z_1,\ldots,X_n,Y_n,Z_n\}.$
It will be convenient to assume that
all such comparisons are strict, i.e.~that
no two elements of $W$ are equal.
To ensure strict comparisons, we 
assume that our algorithms sample
a uniformly random tie-breaking priority
$\psi(w)$ in $[0,1]$,
for each $w \in W$. Then when comparing
two elements $w,w' \in W,$ if $w$ and $w'$ 
are of equal value, the
outcome of the comparison is 
determined by comparing $\psi(w)$
with $\psi(w')$. (The probability
of $\psi(w)=\psi(w')$ is zero because
they are sampled from a distribution
with no point masses.)


\begin{algorithm}
\caption{Single-sample offline algorithm}\label{alg:single-sample}
\KwData{$X_i, Y_i \sim F_i$}
Let $i \in [n]$ be such 
that $X_i = \max \{ X_1,\ldots,X_n\}$.\\
\eIf{$X_i > Y_i$}{
{\bf Hire} $X_i$.\\
}{
{\bf Hire} no candidate.
}
\end{algorithm}

\begin{lemma}
\Cref{alg:single-sample} is IIF and $\frac12$-competitive.
\end{lemma}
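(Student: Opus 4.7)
The plan is to verify the two claims separately. For IIF I would compute the conditional hiring probability directly from the code of \Cref{alg:single-sample}; for $\tfrac12$-competitiveness I would use a symmetry argument between the $X$-samples and the $Y$-samples.

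For IIF, conditional on $X_i = x$ the algorithm hires candidate $i$ exactly when (a)~$i = \argmax_j X_j$ and (b)~$X_i > Y_i$. Given $X_i = x$, the first event is determined by the tie-broken comparisons of each $X_j$ ($j \neq i$) against the value $x$, and the second by the tie-broken comparison of $Y_i$ against $x$. Since all $2n$ samples are mutually independent (and so are their tiebreaking priorities $\psi$), the conditional probability factors as
\[
\Pr[Y_i < x] \cdot \prod_{j \neq i} \Pr[X_j < x],
\]
where $<$ denotes the strict tie-broken order. Because $Y_i$ and $X_i$ share the same distribution, this collapses to $\prod_{k=1}^n \Pr[X_k < x]$, a quantity depending only on $x$, which serves as the required probability function $p(x)$.

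For competitiveness, I would set $W = \{X_1, Y_1, \ldots, X_n, Y_n\}$ and let $M$ be the largest element of $W$ in the tie-broken order. The key observation is that the joint law of $(X_1, Y_1, \ldots, X_n, Y_n)$ together with its priorities is invariant under the involution that swaps $X_i \leftrightarrow Y_i$ (along with $\psi(X_i) \leftrightarrow \psi(Y_i)$) for every $i$. This involution fixes $M$ pointwise but exchanges the events ``$M$ is an $X$-sample'' and ``$M$ is a $Y$-sample'', so
\[
\expect\!\left[M \cdot \ind[M \text{ is an } X\text{-sample}]\right] \;=\; \tfrac12 \expect[M].
\]
The crucial structural fact is that whenever $M$ equals some $X_{i^*}$, the algorithm actually hires that $X_{i^*}$: in that case $X_{i^*}$ is the maximum of the $X$-samples (so $i^* = \argmax_j X_j$), and the test $X_{i^*} > Y_{i^*}$ succeeds because $M$ strictly dominates every other element of $W$. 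Combining these facts,
\[
\expect[\ALG] \;\ge\; \expect\!\left[M \cdot \ind[M \text{ is an } X\text{-sample}]\right] \;=\; \tfrac12 \expect[M] \;\ge\; \tfrac12 \expect[\max_i X_i],
\]
where the final step uses $M \ge \max_i X_i$ pointwise.

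The heart of the proof is the two-line symmetry identity above, and I do not anticipate a major obstacle. The one subtlety is handling atoms in the distributions, but the random tie-breaking priorities $\psi$ are precisely what make the order on $W$ strict almost surely, so any argument phrased with strict inequalities goes through without incident.
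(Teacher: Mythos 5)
Your proposal is correct and follows essentially the same route as the paper: the same direct computation of $\Pr[\text{Hire } i \mid X_i = x]$ as a product of CDF-type factors that is invariant under swapping the roles of $X_i$ and $Y_i$, and the same competitive argument that the algorithm is guaranteed to hire whenever the overall maximum of the $2n$ samples is an $X$-sample, an event of (conditional) probability $\tfrac12$. Your explicit involution justifying the $\tfrac12$ symmetry and your care with tie-breaking atoms are just slightly more detailed renderings of steps the paper states as observations.
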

\begin{proof}
For any $x > 0$ and $i \in [n]$ , 
\[
 \Pr[ \mbox{Hire } i \mid X_i=x ] =
 \Pr[ (\forall j \neq i: X_j < x ) \land
    (Y_i < x) ] =
 \prod_{j=1}^n F_j(x) .
\]
The right side does not depend on
$i$, as required by the definition
of IIF.

Let $M = 
\max \{X_1,Y_1,X_2,Y_2,\ldots,X_n,Y_n\}.$
When $M = X_i$ for some $i$,
\Cref{alg:single-sample} is 
guaranteed to hire $X_i$.
Therefore, if we define 
\[
    \zeta = \begin{cases}
        1, & \mbox{if } M \in \{X_1,\ldots,X_n\}\\
        0, & \mbox{if } M \in \{Y_1,\ldots,Y_n\}
        \end{cases}
\]
we have $\ALG \ge \zeta \cdot M$.
Observing that $\expect[ \zeta \mid M] = \frac12$ for all values of $M$, we find
that
\[
  \expect[ \ALG ] \ge
  \expect[ \zeta \cdot M ] =
  \expect[ \expect[\zeta \mid M] \cdot M ] =
  \frac12 \cdot \expect[ M ] \ge
  \frac12 \cdot \expect[ \max\{X_1,\ldots,X_n\} ] .
\]
\end{proof}

Next we present and analyze a $\frac19$-competitive
stopping rule that satisfies TIF and IIF given two
independent samples $Y_i,Z_i$ from each distribution.
The stopping rule is defined in \Cref{alg:double-sample}
below.

\begin{algorithm}
\caption{Double-sample online algorithm}\label{alg:double-sample}
\KwData{$Y_i, Z_i \sim F_i, \pi \in S_n$}
Let $Y_* = \max \{ Y_1,\ldots,Y_n \}.$\\
\For{$t = 1,2,\ldots,n$}{
Observe $X_{\pi(t)} \sim \F_{\pi(t)}$.\\
\If{$X_{\pi(t)} > Y_*$ and ($X_{\pi(s)} < Y_*$ for all $s <
t$) and ($Z_{\pi(s)} < Y_*$ for all $s \ge t$)}{
{\bf Hire} $X_{\pi(t)}$.
}
}
\end{algorithm}

First, observe that this is a well-defined stopping rule:
given the samples $Y_1,\ldots,Y_n,Z_1,\ldots,Z_n$,
the criterion for hiring $X_{\pi(t)}$ depends only on the 
values of $X_{\pi(1)},\ldots,X_{\pi(t)}$. Furthermore \Cref{alg:double-sample} 
never selects more than one element: if it hires $X_{\pi(t)}$
then $X_{\pi(t)} > Y_*$, which means that for all $t' > t$ the
condition for hiring $X_{\pi(t')}$ will not be satisfied.

\begin{lemma} \label{lem:double-sample-fair}
\Cref{alg:double-sample} satisfies TIF and IIF
\end{lemma}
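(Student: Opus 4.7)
The plan is to compute $\Pr[\ALG^\pi \text{ hires } i \mid X_i = x]$ in closed form and verify that the expression depends only on $x$. Once this is done, both IIF and TIF follow simultaneously: the common function $p(i,x) = p(x)$ certifies TIF, and for each fixed $\pi$ the same function certifies IIF with probability function $p(x)$.

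Fix $\pi$, set $i = \pi(t)$, and condition on $X_i = x$. Reading off the hiring rule, the algorithm hires $i$ exactly when (i) $X_i \succ Y_*$, (ii) $X_j \prec Y_*$ for every $j \neq i$ appearing before $i$ in $\pi$, and (iii) $Z_j \prec Y_*$ for every $j$ appearing at or after time $t$ in $\pi$ (note that this includes $j=i$ itself, i.e.\ $Z_i \prec Y_*$). Here $\prec$ denotes the strict order obtained by breaking ties with the random priorities $\psi(\cdot)$. The key step is to condition further on the entire vector $(Y_1,\ldots,Y_n)$, which pins down $Y_*$ and leaves $\{X_j\}_{j\neq i} \cup \{Z_j\}_{j \in [n]}$ mutually independent.

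The crucial symmetry is now visible: for every index $j \in [n]$, exactly one of $X_j$ or $Z_j$ appears in conjuncts (ii)--(iii), and since $(X_j, \psi(X_j))$ and $(Z_j, \psi(Z_j))$ are identically distributed, the events $\{X_j \prec Y_*\}$ and $\{Z_j \prec Y_*\}$ have the same conditional probability given $Y_*$; call this common value $G_j(Y_*)$. By conditional independence,
\[
\Pr[\ALG^\pi \text{ hires } i \mid X_i = x,\, Y_1,\ldots,Y_n] = \ind[x \succ Y_*] \cdot \prod_{j=1}^n G_j(Y_*),
\]
and the right-hand side has no $i$ or $\pi$ dependence. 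Averaging over $Y$, which is independent of $X_i$, yields a function $p(x) := \expect_Y\!\bigl[\ind[x \succ Y_*] \prod_{j=1}^n G_j(Y_*)\bigr]$ of $x$ alone, simultaneously establishing both fairness properties.

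The main obstacle I anticipate is the clean bookkeeping of which of $X_j, Z_j$ gets used at each position in $\pi$; the $X_j \leftrightarrow Z_j$ exchangeability, which the double-sample structure of the algorithm is tailored to exploit, is what allows this bookkeeping to collapse into a single product $\prod_j G_j(Y_*)$ that is oblivious to $\pi$. A minor technical point, which I would handle once at the outset, is that the priorities $\psi(\cdot)$ extend the symmetry between $X_j$ and $Z_j$ to distributions with atoms, so the argument goes through in the discrete support setting assumed throughout the paper.
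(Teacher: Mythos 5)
Your proposal is correct and follows essentially the same route as the paper's own proof: condition on $Y_*$ (the paper conditions on its value $y$; you condition on the full vector $Y$), observe that for each index $j$ exactly one of $X_j \prec Y_*$ or $Z_j \prec Y_*$ is required and that both events have the same conditional probability, so the hiring probability collapses to $\prod_{j=1}^n F_j(Y_*)$ times an indicator, independent of $i$ and $\pi$. Your explicit retention of the indicator $\ind[x \succ Y_*]$ and the careful handling of ties via $\psi$ are, if anything, slightly more precise than the paper's write-up.
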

\begin{proof}
For any $x, y \in \Support$, and any index $i$,
let us calculate the conditional probability
of hiring $i$ given that $X_i=x$ and $Y_* = y$.
If $x < y$ then this probability is zero.
Otherwise, suppose $i = \pi(t)$. 
We have
\begin{align*}
    \Pr[\mbox{Hire } i \mid X_i = x, Y_* = y] & =
    \Pr[X_{\pi(s)} < y \mbox{ for all } s < t] \cdot
    \Pr[Z_{\pi(s)} < y \mbox{ for all } s \ge t] \\ 
    & =
    \prod_{s=1}^n F_{\pi(s)}(y) = \prod_{j=1}^n F_j(y) . 
\end{align*}
Summing over the possible values of $y$,
\[
    \Pr[\mbox{Hire } i \mid X_i = x] =
    \sum_{y \in \Support}
    \Pr[Y_* = y] \cdot \Pr[\mbox{Hire } i \mid X_i = x, Y_*
    = y]
     = 
    \sum_{y \in \Support}  \Pr[Y_* = y] 
    \prod_{j=1}^n F_j(y) .
\]
The right side depends on neither $\pi$ nor $i$,
so the stopping rule is both TIF and IIF.
\end{proof}

\begin{lemma}
\Cref{alg:double-sample} is $\frac19$-competitive.
\end{lemma}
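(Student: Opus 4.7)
The plan is to bound $\expect[\ALG]$ from below by conditioning on $Y_* = \max_j Y_j$ and exploiting the mutual independence of $X$, $Y$, and $Z$. Let $G(y) = \prod_{j=1}^n F_j(y)$, which is the common CDF of the i.i.d.~random variables $X_* := \max_j X_j$, $Y_*$, and $Z_* := \max_j Z_j$. Fix an arbitrary permutation $\pi$. The first step, closely mirroring the calculation in \Cref{lem:double-sample-fair}, is to show that
\[
\expect[\ALG \mid Y_* = y] = G(y) \sum_{j=1}^n \expect[X_j \ind[X_j > y]].
\]
For $i = \pi(t)$, the hiring event decomposes (given $Y_* = y$) into three conditions on disjoint sets of independent variables, and the combined probability of the ``$X$-side'' and ``$Z$-side'' portions is $\prod_{s<t}F_{\pi(s)}(y) \prod_{s \geq t}F_{\pi(s)}(y) = G(y)$, which is independent of $t$; summing over $t$ yields the claimed identity.

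Next I would dominate the inner sum by its largest term. Since $\sum_j X_j \ind[X_j > y] \geq X_* \ind[X_* > y]$ pointwise, taking expectation over $Y_*$ (independent of $X$) gives
\[
\expect[\ALG] \geq \expect\!\left[ G(Y_*) \cdot X_* \ind[X_* > Y_*] \right].
\]
Because $G$ is the CDF of $Z_*$ and $Z_*$ is independent of $(X_*, Y_*)$, we can write $G(Y_*) = \Pr[Z_* \leq Y_* \mid Y_*]$ and absorb this indicator into the expectation, obtaining
\[
\expect[\ALG] \geq \expect\!\left[ X_* \ind[X_* > Y_* \geq Z_*] \right].
\]

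Finally I would symmetrize. The tie-breaking convention makes all pairwise comparisons among the i.i.d.~variables $X_*, Y_*, Z_*$ almost surely strict, so each of the six strict orderings has probability $\frac{1}{6}$ and by symmetry
\[
\expect[X_* \ind[X_* > Y_* > Z_*]] = \frac{1}{6}\,\expect[\max(X_*, Y_*, Z_*)] \geq \frac{1}{6}\, \expect[X_*],
\]
which is already stronger than the claimed $\frac{1}{9}\,\expect[X_*]$. The main delicate point will be step~1: the first-crossing time $t^*$ seemingly couples the $X$-side and $Z$-side conditions, but the key observation unlocking the argument is the $t$-independent identity $\prod_{s<t}F_{\pi(s)}(y) \prod_{s\geq t}F_{\pi(s)}(y) = G(y)$, which makes the sum over $t$ collapse cleanly to a single factor of $G(y)$ times $\sum_j \expect[X_j \ind[X_j > y]]$.
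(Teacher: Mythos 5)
Your proof is correct, and it takes a genuinely different route from the paper's --- one that in fact yields a stronger constant. The paper argues combinatorially: it identifies the single ``good'' event $\mathcal{E}$ that the largest element of the multiset $W$ lies among the $X$'s and the second-largest among the $Y$'s, observes that on $\mathcal{E}$ the algorithm is guaranteed to hire the global maximum of $W$, and bounds $\Pr[\mathcal{E}\mid W]\ge \tfrac13\cdot\tfrac13=\tfrac19$. Your argument instead computes $\expect[\ALG\mid Y_*=y]$ \emph{exactly} (the delicate point you flag --- that the $X$-side and $Z$-side conditions multiply to the $t$-independent quantity $G(y)=\prod_j F_j(y)$ --- is exactly the identity already established in the proof of \Cref{lem:double-sample-fair}, and the sum over $t$ is legitimate because the hiring events for distinct $t$ are disjoint), then discards all but the maximal term, reinterprets $G(Y_*)$ as $\Pr[Z_*<Y_*\mid Y_*]$, and symmetrizes over the i.i.d.\ triple $(X_*,Y_*,Z_*)$ to get
\[
\expect[\ALG]\;\ge\;\expect\bigl[X_*\,\ind[X_*>Y_*>Z_*]\bigr]=\tfrac16\,\expect[\max(X_*,Y_*,Z_*)]\;\ge\;\tfrac16\,\expect[X_*].
\]
The reason you beat the paper's constant is that the event $\{X_*>Y_*>Z_*\}$ (probability exactly $\tfrac16$) is strictly weaker than the paper's $\mathcal{E}$: it does not require the second-largest element of all of $W$ to be a $Y$, only that the best $Y$ beat the best $Z$; your exact conditional formula shows the algorithm still collects $X_*$ in expectation on this larger event. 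What the paper's argument buys in exchange is brevity and robustness --- it needs no conditional expectation computation, only the observation that on $\mathcal{E}$ the hiring condition is satisfied verbatim. Both proofs share the same mild imprecision about point masses (treating $F_j(y)$ as $\Pr[X_j<y]$ under the tie-breaking convention), so you are not held to a higher standard there. Your argument is sound as a proof of the stated $\tfrac19$-competitiveness, and actually establishes $\tfrac16$.
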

\begin{proof}
    Recall the multiset $W = 
    \{X_1,Y_1,Z_1,\ldots,X_n,Y_n,Z_n\}$,
    and recall that its elements are totally
    ordered using random priorities to break
    ties between elements of $W$ whose values
    are equal. Let $W_* > W_{**}$ denote 
    the two largest values in $W$. We will
    use $\mathcal{E}$ to
    denote the event that $W_* \in \{X_1,\ldots,X_n\}$
    and $W_{**} \in \{Y_1,\ldots,Y_n\}$.
    Conditional on $\mathcal{E}$, \Cref{alg:double-sample}
    is assured of hiring the largest element of $W$.
    Conditional on the contents of
    the set $W$ --- but not the partition of
    its elements into $X = \{X_1,\ldots,X_n\}, \,
    Y = \{Y_1,\ldots,Y_n\}, Z = \{Z_1,\ldots,Z_n\}$ ---
    the probability that $W_* \in X$ is $\frac13$
    and the conditional probability that $W_{**} \in Y$
    given $W_* \in X$ is at least $\frac13$. (It 
    is $\frac12$ if $W_{**}$ and $W_*$ are samples 
    from the same distribution, and $\frac13$ if
    they are from different distributions.) Hence,
    $\Pr[\mathcal{E} \mid W] \ge \frac19$ and 
    \[
        \expect [ \ALG ] = 
        \expect[ \expect[ \ALG \mid W ] ] \ge
        \expect[ W_* \cdot \Pr[\mbox{Hire } W_* | W] ] \ge
        \frac19 \cdot \expect[ W_* ] \ge 
        \frac19 \cdot \expect[ \max\{X_1,\ldots,X_n\} ] .
    \]
\end{proof}

\section*{Acknowledgements}

\maedit{
The authors would like to thank Kate Donahue for providing
insightful feedback during the early stages of the
development of this work.
}

\bibliographystyle{alpha}
\bibliography{bibliography}

\begin{thebibliography}{CCDNF21}

\bibitem[AKW14]{Azar2014}
Pablo~D. Azar, Robert Kleinberg, and S.~Matthew Weinberg.
\newblock Prophet inequalities with limited information.
\newblock In {\em Proceedings of the Twenty-Fifth Annual ACM-SIAM Symposium on
  Discrete Algorithms}, SODA '14, page 1358–1377, USA, 2014. Society for
  Industrial and Applied Mathematics.

\bibitem[BJS09]{Buchbinder}
Niv Buchbinder, Kamal Jain, and Mohit Singh.
\newblock Secretary problems and incentives via linear programming.
\newblock {\em SIGecom Exch.}, 8(2), dec 2009.

\bibitem[CCDNF21]{correa2021fairness}
Jose Correa, Andres Cristi, Paul Duetting, and Ashkan Norouzi-Fard.
\newblock Fairness and bias in online selection.
\newblock In {\em International Conference on Machine Learning}, pages
  2112--2121. PMLR, 2021.

\bibitem[EHLM17]{Esfandiari2007}
Hossein Esfandiari, Mohammad~Taghi Hajiaghayi, Vahid Liaghat, and Morteza
  Monemizadeh.
\newblock Prophet secretary.
\newblock {\em SIAM J. Discret. Math.}, 31:1685--1701, 2017.

\bibitem[HKS07]{Hajiaghayi07}
Mohammad~Taghi Hajiaghayi, Robert Kleinberg, and Tuomas Sandholm.
\newblock Automated online mechanism design and prophet inequalities.
\newblock In {\em Proceedings of the 22nd National Conference on Artificial
  Intelligence - Volume 1}, AAAI'07, page 58–65. AAAI Press, 2007.

\bibitem[KPS13]{Kash13}
Ian Kash, Ariel~D. Procaccia, and Nisarg Shah.
\newblock No agent left behind: Dynamic fair division of multiple resources.
\newblock In {\em Proceedings of the 2013 International Conference on
  Autonomous Agents and Multi-Agent Systems}, AAMAS '13, page 351–358,
  Richland, SC, 2013. International Foundation for Autonomous Agents and
  Multiagent Systems.

\bibitem[KS77]{KrengelSucheston}
Ulrich Krengel and Louis Sucheston.
\newblock Semiamarts and finite values.
\newblock {\em Bulletin of the American Mathematical Society}, 83:745--747,
  1977.

\bibitem[KW12]{KleinbergWeinberg}
Robert Kleinberg and Seth~Matthew Weinberg.
\newblock Matroid prophet inequalities.
\newblock In {\em Proceedings of the Forty-Fourth Annual ACM Symposium on
  Theory of Computing}, STOC '12, page 123–136, New York, NY, USA, 2012.
  Association for Computing Machinery.

\bibitem[LIS14]{lien14}
Robert~W. Lien, Seyed M.~R. Iravani, and Karen~R. Smilowitz.
\newblock Sequential resource allocation for nonprofit operations.
\newblock {\em Operations Research}, 62(2):301--317, 2014.

\bibitem[MNR21]{radEC21}
Vahideh~H. Manshadi, Rad Niazadeh, and Scott Rodilitz.
\newblock Fair dynamic rationing.
\newblock In P{\'{e}}ter Bir{\'{o}}, Shuchi Chawla, and Federico Echenique,
  editors, {\em {EC} '21: The 22nd {ACM} Conference on Economics and
  Computation, Budapest, Hungary, July 18-23, 2021}, pages 694--695. {ACM},
  2021.

\bibitem[RWW20]{rubinstein20}
Aviad Rubinstein, Jack~Z. Wang, and S.~Matthew Weinberg.
\newblock {Optimal Single-Choice Prophet Inequalities from Samples}.
\newblock In Thomas Vidick, editor, {\em 11th Innovations in Theoretical
  Computer Science Conference (ITCS 2020)}, volume 151 of {\em Leibniz
  International Proceedings in Informatics (LIPIcs)}, pages 60:1--60:10,
  Dagstuhl, Germany, 2020. Schloss Dagstuhl--Leibniz-Zentrum fuer Informatik.

\bibitem[SC84]{SamuelCahn}
Ester Samuel-Cahn.
\newblock Comparison of threshold stop rules and maximum for independent
  nonnegative random variables.
\newblock {\em The Annals of Probability}, 12(4):1213--1216, 1984.

\bibitem[SJBY20]{Sinclair2020}
Sean~R. Sinclair, Gauri Jain, Siddhartha Banerjee, and Christina~Lee Yu.
\newblock Sequential fair allocation of limited resources under stochastic
  demands.
\newblock {\em ArXiv}, abs/2011.14382, 2020.

\bibitem[Wal11]{Walsh11}
Toby Walsh.
\newblock Online cake cutting.
\newblock In Ronen~I. Brafman, Fred~S. Roberts, and Alexis Tsouki{\`a}s,
  editors, {\em Algorithmic Decision Theory}, pages 292--305, Berlin,
  Heidelberg, 2011. Springer Berlin Heidelberg.

\bibitem[Yan10]{yan10}
Qiqi Yan.
\newblock Mechanism design via correlation gap.
\newblock {\em Proceedings of the Annual ACM-SIAM Symposium on Discrete
  Algorithms}, 08 2010.

\end{thebibliography}

\end{document}